\documentclass[sigconf, nonacm, pdfa]{acmart}
\AtBeginDocument{%
  }

\newcommand{\flowsc}{\textsf{FlowSC}\xspace}
\newcommand{\bp}{\textsf{BipartitePlus}\xspace}
\newcommand{\fastest}{\textsf{Fastest}\xspace}
\newcommand{\asc}{\textsf{ASC}\xspace}
\newcommand{\learnsc}{\textsf{LearnSC}\xspace}
\newcommand{\neursc}{\textsf{NeurSC}\xspace}
\newcommand{\alley}{\textsf{Alley}\xspace}
\newcommand{\alleytpi}{\textsf{AlleyTPI}\xspace}

\newcommand{\lss}{\textsf{LSS}\xspace}

\newcommand{\Yeast}{\textsf{Yeast}\xspace}
\newcommand{\Hprd}{\textsf{HPRD}\xspace}
\newcommand{\Human}{\textsf{Human}\xspace}
\newcommand{\Wordnet}{\textsf{WordNet}\xspace}
\newcommand{\Dblp}{\textsf{DBLP}\xspace}
\newcommand{\Youtube}{\textsf{Youtube}\xspace}
\newcommand{\EU}{\textsf{Eu2005}\xspace}
\newcommand{\Patents}{\textsf{Patents}\xspace}
\newcommand{\Twitter}{\textsf{Twitter}\xspace}
\newcommand{\Friendster}{\textsf{Friendster}\xspace}

\newcommand{\qerror}{\textsf{q-error}\xspace}

\usepackage[a-2b]{pdfx}
\usepackage{graphics}
\usepackage{booktabs}
\usepackage{datetime}
\usepackage[linesnumbered,ruled,vlined]{algorithm2e}
\usepackage{thmtools, thm-restate}
\usepackage{mathtools}
\usepackage{enumitem}
\usepackage[nameinlink]{cleveref}
\usepackage{caption}
\usepackage{subcaption}
\usepackage{multirow}
\usepackage{setspace}
\usepackage{lipsum}
\usepackage{multicol}
\usepackage{amsmath}
\usepackage{multirow, makecell, booktabs, diagbox}
\usepackage{cleveref}
\usepackage[dvipsnames, svgnames]{xcolor}

\newtheorem{theorem}{\textbf{Theorem}}[section]
\newtheorem{theorem*}{Theorem}
\newtheorem*{definition*}{Definition}
\newtheorem{definition}[theorem]{Definition}
\newtheorem{lemma}[theorem]{Lemma}
\newtheorem{example}[theorem]{Example}
\newtheorem*{prf*}{Proof}
\begin{document}

\title{Algebraic Subgraph Counting}

\author{Qiuyu Guo}
\affiliation{%
  \institution{University of New South Wales}
  \city{Sydney}
  \country{Australia}
}
\email{qiuyu.guo@unsw.edu.au}

\author{Jianye Yang}
\affiliation{%
  \institution{Guangzhou University \\
  PengCheng Laboratory}
  \city{Guangzhou}
  \country{China}
}
\authornote{Corresponding author}
\email{jyyang@gzhu.edu.cn}

\author{Wenjie Zhang}
\affiliation{
  \institution{University of New South Wales}
  \city{Sydney}
  \country{Australia}
}
\email{wenjie.zhang@unsw.edu.au }

\author{Hanchen Wang}
\affiliation{%
  \institution{University of Technology Sydney}
  \city{Sydney}
  \country{Australia}
}
\email{hanchen.wang@uts.edu.au}

\author{Ying Zhang}
\affiliation{%
  \institution{Zhejiang Gongshang University}
  \city{Hangzhou}
  \country{China}
}
\email{ying.zhang@zjgsu.edu.cn}

\author{Xuemin Lin}
\affiliation{%
  \institution{The Chinese University of Hong Kong, Shenzhen}
  \city{Shenzhen}
  \country{China}
}
\email{xuemin.lin@gmail.com}

\renewcommand{\shortauthors}{Guo et al.}



\begin{abstract}
Subgraph isomorphism counting is a fundamental problem in graph analytics, which aims to find the number of subgraph isomorphisms of a query graph in a data graph. The candidate tree-based framework provides a promising foundation for subgraph counting tasks, offering a unified counting paradigm that can be extended beyond tree patterns.
However, supporting subgraph isomorphism within this framework remains challenging, as it requires handling both the non-tree edge constraint and the injective mapping constraint. Although existing solutions employ sampling or learning techniques to address these constraints in this framework, they still either suffer from inherent sampling failures or rely heavily on supervision. In this paper, we propose \asc, an algebraic subgraph counting approach built on the candidate tree-based counting framework. In our method, the non-tree edge constraint is directly incorporated into the candidate tree-based counting process through a matrix-based computation method, enabling subgraph homomorphism counting with high accuracy in polynomial time. Based on the resulting subgraph homomorphism count, we further apply a local sampling method to address the injective mapping constraint, thereby obtaining the final subgraph isomorphism count. Extensive experiments show that \asc can achieve substantially better and more stable performance over the baselines across various datasets, while scaling to billion-edge graphs. Most impressively, as a non-learning method, \asc can even achieve more than an order of magnitude higher average accuracy than the state-of-the-art learning-based method \flowsc with similar efficiency. This paper is the full version of the work accepted at SIGMOD 2027. The code is available at \url{https://github.com/EricaGuoQiuyu/AlgebraicSubgraphCounting}.
\end{abstract}




\maketitle

\section{Introduction}
Given a query graph $Q$ and a data graph $G$, the problem of subgraph counting is to find the number of subgraph isomorphisms of $Q$ in $G$. It is a fundamental problem with a wide range of applications, such as query optimization~\cite{querylanguages,sparql,bonifati2020analytical, deeds2023safebound}, biological network analysis~\cite{proteinnetworks, evograph, protein-protein, protein-structures}, and fraud detection~\cite{FundementalProblems, qiu2018real}. 
Despite its importance, subgraph counting is computationally challenging due to its \#p-hardness~\cite{Fastest,flowsc}.
Exact counting via subgraph matching \cite{li2025subgraph, bee, gup, choi2023bice, circinus, VEQ, bsx, rapidmatch} is often intractable due to the inherent complexity of enumeration. Even counting a 32-clique query in a 32-clique graph can exceed a two-hour time limit in our experiments. To broaden applicability, recent efforts have focused on approximate counting \cite{Alley, CSET, Fastest, flowsc}. In this paper, we focus on approximate subgraph counting.

\vspace{1mm}
\noindent \textbf{Existing Solutions and Limitations}.
Current solutions for approximate subgraph counting can generally be classified into three categories, including summarization-based, sampling-based, and learning-based methods. 
Summarization-based methods \cite{BoundSketch,CSET,SumRDF,color} adopt a decomposition-aggregation paradigm, decomposing the query graph into substructures and aggregating their individual counts to estimate the total count. However, their reliance on the independence assumption between substructures often leads to unsatisfactory accuracy in real-world graphs.
Sampling-based methods \cite{IMPR, MOTIVO, Alley, Fastest} estimate counts via match frequencies in sampled subgraphs. While sampling can achieve high accuracy for simple queries on small graphs, they suffer from inherent sampling failures as the sample space expands. To address this, \fastest \cite{Fastest} employs advanced filtering and candidate tree-based sampling to prune the search space and improve sampling success rates. Nonetheless, eliminating sampling failures remains challenging for complex queries on large graphs.
Learning-based methods \cite{LSS, NeurSC, flowsc, Learnsc} predict counts by regressing from learned graph features. Earlier works \cite{NeurSC, LSS} struggle to capture the relationship between the graph features and counts or face training instabilities. The state-of-the-art \flowsc \cite{flowsc} mitigates these issues by simulating data flow within a candidate-tree framework and using a pretraining scheme. However, the performance of these techniques is heavily reliant on the labeled data, which is often impractical for hard query workloads where obtaining labels itself is computationally prohibitive.

\vspace{1mm}
\noindent \textbf{Matrix-based Pattern Counting.}
The connection between matrix computation and pattern counting has deep roots. Early seminal works~\cite{earlywork1, earlywork2, earlywork3} exploit fast matrix multiplication to exactly count specific structured patterns like small cycles and cliques. The \textit{Functional Aggregate Query} (FAQ) framework~\cite{faq} provides a unified theoretical foundation for these concepts, formulating general subgraph counting as a sum-product query:
\begin{equation}
    \label{eq:faq_counting}
    \sum_{x_1,\dots,x_k \in V_G}\prod_{(i,j)\in E_Q} E(x_i, x_j),
\end{equation}
where each variable $x_i$ denotes the mapping of a query vertex to a data vertex, and each factor $E(x_i, x_j)$ enforces edge consistency. Under this framework, matrix multiplication can be seen as a special case of FAQ evaluation, which establishes a natural connection between subgraph counting and matrix-based computation. To solve such queries generally, FAQ proposes \textsf{InsideOut}, an exact algorithm based on variable elimination. Crucially, the FAQ perspective also reveals the fundamental bottleneck shared by all these exact methods: their efficiency is strictly bounded by $O(N^{\mathit{faqw}})$, where $N$ is the data size and $\mathit{faqw}$ is the fractional FAQ-width. While early matrix approaches and \textsf{InsideOut} are effective for low-width topologies, $\mathit{faqw}$ is large for arbitrary or randomly generated subgraphs. Without an exploitable structure, exact evaluation becomes computationally impractical in general, motivating scalable alternatives beyond exact evaluation.

\vspace{1mm}
\noindent \textbf{Candidate Tree-based Counting}.
The candidate tree-based framework supports polynomial-time subgraph homomorphism counting against tree queries~\cite{Fastest, flowsc}.
However, there are two major issues in extending it for subgraph isomorphism counting against generic query graphs.
\textit{First, it ignores the non-tree edges, i.e., edges exist in the query graph but not in its spanning tree}. Ignoring the matching constraints defined by such edges would lead to significant overestimation. 
\textit{Second, it cannot handle the injective mapping constraint required by isomorphism}. Because the candidate tree-based counting only maintains the weight of matching rather than matching states, it cannot enforce injectivity and thus overestimates the counts, as a data vertex might be used multiple times in a matching.

To apply the candidate tree-based counting framework to subgraph isomorphism counting, existing methods develop effective techniques to handle the non-tree edge and injective constraints. \fastest treats the tree homomorphism count as an upper bound and adopts sampling to approximate both the non-tree and injectivity constraints. \flowsc simulates the data flow pattern in the candidate tree counting process and learns the relationship between graph features and subgraph counts, leveraging query features and supervision to implicitly enforce the constraints. However, these methods often suffer from sampling failures in excessive sampling spaces or from insufficient supervision.

\vspace{1mm}
\noindent \textbf{Our Approach and Contributions}.
In this paper, we propose a candidate tree-based counting method, called \underline{A}lgebraic \underline{S}ubgraph \underline{C}ounting (\asc). Since the candidate tree framework inherently satisfies tree-edge constraints, we decouple the remaining requirements for subgraph isomorphism into \textit{non-tree edge constraints} and \textit{isomorphism (injectivity mapping) constraints}.

To handle the non-tree edges, we divide them into two categories: those connecting vertices with the same parent (Type I) and those connecting vertices with different parents (Type II). We first show that it takes exponential time to compute the exact homomorphism count if the Type-I non-tree edges are preserved in the candidate tree framework. We propose a matrix-based method to incorporate Type-I non-tree edge constraints into candidate tree-based counting, delivering high-quality approximations in polynomial time.

For the Type-II non-tree edge constraint, we have a key observation that the compatibility of the candidates of the two endpoints of a Type-II non-tree edge with a valid matching depends on their lowest common ancestor (LCA) in the tree. We incorporate the Type-II non-tree edge constraints into the candidate tree-based counting process by computing the LCA ratios, which are used to discard the invalid weights accumulated at LCAs in the counting phase. Together with the Type-I non-tree edge constraint computation, this leads to an algebraic method for subgraph homomorphism counting. Based on the resulting homomorphism count, we further apply local sampling to enforce injective mapping constraints, producing the final subgraph isomorphism count.

We conduct extensive experiments on $10$ real-world datasets, including two billion-edge graphs. The results demonstrate that \asc can achieve consistently better and more stable performance over the existing solutions under the majority of query settings. 

\section{Preliminary}
\label{sec:Prelim}
\subsection{Problem Definition}

In this paper, we focus on a simple vertex-labeled, undirected, and connected graph $g = (V, E, L, \Sigma)$. Here, $V$ is a set of vertices of $g$, $E \subseteq V \times V$ is a set of edges of $g$, $L$ is a labeling function that assigns a label $L(v) \in \Sigma$ to each vertex $v$ in $g$, and $\Sigma$ is a set of labels.
We use $N(v) = \{v' \in V \mid (v, v') \in E\}$ to denote the neighbors of a vertex $v$, and $d(v) = |N(v)|$ to denote the degree of $v$. In this paper, we use $Q = (V_Q, E_Q, L, \Sigma)$ and $G = (V_G, E_G, L, \Sigma)$ to denote the query graph and the data graph, respectively. Besides, we use $u$ to represent a vertex in $Q$ and $v$ to represent a vertex in $G$.

\begin{definition}[\textbf{Subgraph Isomorphism and Homomorphism}]
\label{def:SubgraphIsomorphism}
    Given a query graph $Q$ and a data graph $G$, subgraph isomorphism is an injective mapping $f: V_Q \mapsto V_G$, such that $\forall u \in V_Q$, $L(u) = L(f(u))$ and $\forall (u, u') \in E_Q$, $(f(u), f(u')) \in E_G$. When injectivity is not required, the mapping $f$ defines a subgraph homomorphism.
\end{definition}

In this paper, we focus on the subgraph isomorphism.
An injective mapping from $Q$ to $G$ is called a \textit{subgraph isomorphic embedding}, also referred to as an \textit{embedding} or a \textit{match} when context is clear.

\begin{definition}[\textbf{Subgraph Matching and Counting}]
    Given a query graph $Q$ and a data graph $G$, subgraph matching is to find all embeddings of $Q$ in $G$, while subgraph counting is to compute the number of embeddings.
\end{definition}

Given the \#P-hardness nature of exact subgraph counting, we focus on approximate solutions for subgraph counting in this paper.

\vspace{1mm}
\noindent \textbf{Problem Statement}. Given a query graph $Q$ and a data graph $G$, we aim to approximate the number of all embeddings of $Q$ in $G$.

\subsection{Candidate Tree-based Counting}

\begin{figure}[t]
    \centering
    \includegraphics[width=\linewidth]{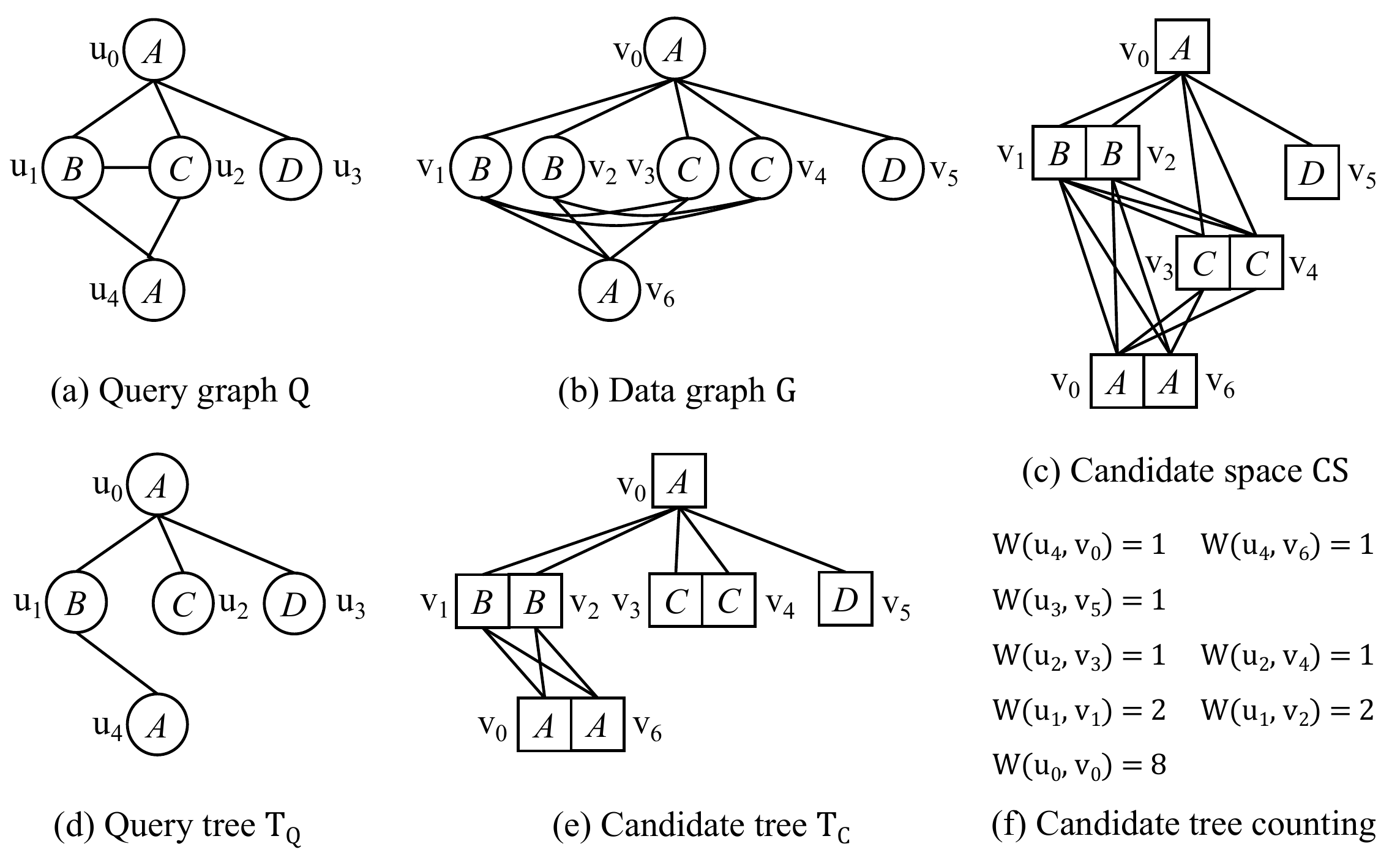}
    \caption{Running example of candidate tree-based counting.}
    \label{fig:non_tree}
\end{figure}

\begin{definition}[\textbf{Candidate Space}]
    Given a query graph $Q$ and a data graph $G$, a candidate space $CS$ maintains a candidate vertex (Resp. edge) set for each vertex (Resp. edge) in $Q$, such that all matches of $Q$ in $G$ are preserved in the solution space defined by $CS$.
    
    $\bullet$ Given a vertex $u\in V_Q$, its candidate vertex set $C(u) \subseteq V_G$ satisfies that, for any embedding $f$, $f(u) \in C(u)$ must hold, but a vertex in $C(u)$ is not necessarily mapped to $u$ in any valid embedding.
    
    $\bullet$ Given an edge $(u, u') \in E_Q$, its candidate edge set $C(u, u') \subseteq E_G$ satisfies that, for any embedding $f$, $(f(u), f(u')) \in C(u, u')$ must hold, but an edge in $C(u, u')$ is not necessarily mapped to $(u, u')$ in any valid embedding.
        
    $\bullet$ Given an edge $(u, u') \in E_Q$ and a candidate vertex $v \in C(u)$, the conditional candidate vertex set $C(u' \mid u, v) \subseteq C(u')$ satisfies that, for any embedding $f$ with $f(u) = v$, $f(u') \in C(u' \mid u, v)$ must hold, but a vertex in $C(u' \mid u, v)$ is not necessarily mapped to $u'$ in any valid embedding.
\end{definition}

\begin{example}
    Consider the query graph $Q$ and data graph $G$ shown in \autoref{fig:non_tree}(a) and (b). We construct the corresponding candidate space $CS$ shown in \autoref{fig:non_tree}(c). In this candidate space, $v_1$ and $v_2$ belong to $C(u_1)$, $(v_0, v_1) \in C(u_0, u_1)$, and $v_1 \in C(u_1 \mid u_2, v_3)$ as $u_1$ can match $v_1$ under the condition $u_2 \mapsto v_3$. 
    
\end{example}

To improve the efficiency and accuracy of subgraph counting, existing methods often build a compact candidate space by removing the invalid vertices and edges through candidate filtering algorithms ~\cite{GraphQL, flowsc, Fastest, CFLMatch}. Then, the subsequent counting process is performed against a more compact candidate space instead of the original data graph, without sacrificing completeness. The candidate tree structure organizes the candidate space hierarchically to assist subgraph counting.

\begin{definition}[\textbf{Candidate Tree}]
    Let $T_Q$ be a spanning tree of the query graph $Q$. The candidate tree $T_C$ is obtained by organizing the candidate space according to the structure of $T_Q$: each query vertex $u$ in $T_Q$ is associated with its candidate set $C(u)$, and a tree edge $(u, u')$ in $T_Q$ is replaced by the candidate edges between $C(u)$ and $C(u')$. Consequently, $T_C$ encodes all tree homomorphisms of $T_Q$ in the candidate space. 
\end{definition}

\begin{example}
\label{ex:candidate_trees}
    For the query graph $Q$ in \autoref{fig:non_tree}(a), we construct its BFS spanning tree $T_Q$ as shown in \autoref{fig:non_tree}(d). Its corresponding candidate tree $T_C$ is illustrated in \autoref{fig:non_tree}(e). In particular, $(\langle u_0, v_0 \rangle, \langle u_1, v_1 \rangle, \allowbreak \langle u_2, v_3 \rangle, \langle u_3, v_5 \rangle, \langle u_4, v_0\rangle)$ forms a homomorphism embedding of $T_Q$.
\end{example}

\noindent \textbf{General Idea of Candidate Tree-based Counting}.
The candidate tree-based counting framework utilizes dynamic programming to compute the number of homomorphisms of $T_Q$ in polynomial time, providing a tight upper bound for subgraph isomorphism counting of $Q$.
In this framework, homomorphism counts are accumulated in a bottom-up manner along the tree edges of $T_C$.
During the processing, each matching pair maintains a weight as the number of homomorphism matches of its subtrees.
At the root level, the weights of all root matching pairs are collected to obtain the homomorphism count for the $T_Q$. 
More specifically, for a vertex $u \in V_Q$, we use $N_c(u)$ to denote its children in $T_Q$.
For each $v \in C(u)$, we use a weight $W(u, v)$ to represent the number of homomorphisms of the subtree $T_Q(u)$ (rooted at $u$) contained in the candidate subtree $T_C(v)$ (rooted at $v$). Then, $W(u, v)$ is updated as follows:
\begin{equation}
    \label{eq:tree_counting}
    W(u, v) = \prod\limits_{u_c \in N_c(u)} \sum\limits_{v_c \in C(u_c \mid u, v)} W(u_c, v_c).
\end{equation}
\noindent Each matching pair updates its weight with the weights of its child matching pairs, and $W(u, v) = 1$ if $u$ is a leaf vertex. The counting process proceeds from the leaves to the root, with an overall $O(|E_Q||E_C|)$ time~\cite{Fastest}.

\begin{example}
\label{ex:counting}
    Continuing the \autoref{ex:candidate_trees}, we compute the candidate tree count in \autoref{fig:non_tree}(e) and the updated weights for each matching pair are shown in \autoref{fig:non_tree}(f). Candidate tree counting proceeds in a bottom-up manner, starting from the leaves. All leaf matching pairs have a weight of $1$, e.g., $W(u_4, v_0) = 1$, and $W(u_4, v_6) = 1$. Following the counting rule in \autoref{eq:tree_counting}, $W(u_1, v_1) = (W(u_4, v_0) + W(u_4, v_6))$, so we have $W(u_1, v_1) = 2$. Similarly, $W(u_1, v_2) = 2$. At the root level, $W(u_0, v_0) = (W(u_1, v_1) + W(u_1, v_2)) \times (W(u_2, v_3) + W(u_2, v_4)) \times W(u_3, v_5) = 8$, so the tree homomorphism count is $8$.
\end{example}

\vspace{1mm}
\noindent \textbf{Issues for Subgraph Isomorphism}.
The candidate tree-based counting method offers an efficient paradigm for subgraph homomorphism counting against tree queries. 
However, there are two major issues in extending it for subgraph isomorphism counting against generic query graphs.

\vspace{1mm}
$\bullet$ \textit{Issue 1: Non-tree Edge Constraint}.
\label{issue:non-tree}
As shown in \autoref{eq:tree_counting}, for each matching pair $(u,v)$, we compute, for each child vertex $u_c \in N_c(u)$, the sum of $W(u_c, v_c)$ over all $v_c \in C(u_c \mid u, v)$, and then take the product of these per-child sums to obtain $W(u, v)$.
However, in the processing, the non-tree edges are ignored, which may significantly affect the accumulated weight for the parent vertex, since every edge in the query graph must be matched. 
Here, a non-tree edge refers to an edge existing in the query graph but not in the query tree.

\begin{example}\label{ex:non_tree_constraints}
    Comparing the $Q$ in \autoref{fig:non_tree}(a) with its BFS tree $T_Q$ in \autoref{fig:non_tree}(d), there are $2$ non-tree edges $(u_1, u_2)$ and $(u_2, u_4)$, that are not included in $T_Q$. As in \autoref{ex:counting},
    the resulting candidate-tree count is $8$. However, when graph homomorphism constraints are considered, the true count is $4$. The discrepancy is caused by non-tree constraints.
    Considering the constraints imposed by the non-tree edge $(u_1, u_2)$, the mappings $u_1 \mapsto v_2$ and $u_2 \mapsto v_3$ are conflicting. Nevertheless, the weight update of $W(u_0, v_0)$ implicitly assumes that every candidate in $C(u_1 \mid u_0, v_0)$ connects to every candidate in $C(u_2 \mid u_0, v_0)$, which does not hold in practice. Similarly, the constraint imposed by $(u_2, u_4)$ further restricts valid combinations and should also limit the weights propagation in the candidate tree counting.
\end{example}

$\bullet$ \textit{Issue 2: Isomorphism Constraint}.
In the processing of candidate tree-based counting, only the weights of matching pairs are maintained, rather than the explicit matching states.
This implies that a data vertex may correspond to multiple query vertices, resulting in a non-injective mapping, which overestimates the number of subgraph isomorphism embeddings. We use the term \textit{isomorphism (iso) constraint} to refer to the injectivity mapping constraint.

\begin{example}\label{ex:iso_constraints}
    Continuing the \autoref{ex:non_tree_constraints}, when vertex labels and vertex degrees are considered as filtering criteria~\cite{Ullmann1976, CFLMatch}, $v_0$ is contained in both $C(u_0)$ and $C(u_4)$ as shown in \autoref{fig:non_tree}(c). Consequently, in the current candidate tree, the weight of leaf $v_0$ in $W(u_4, v_0)$ is propagated upward to the root $v_0$ in $W(u_0, v_0)$, causing $v_0$ to be counted twice and resulting in an overestimation in subgraph isomorphism count.
\end{example}

\vspace{1mm}
\noindent \textbf{Remedy of Existing Solutions}.
To support generic subgraph counting, prior works extend candidate tree counting by approximating non-tree and injective constraints. \fastest~\cite{Fastest} estimates the final count by sampling candidate trees and evaluating whether they form valid query matches. However, its performance deteriorates rapidly when the sampling space becomes prohibitively large, which often happens on larger datasets and leads to sampling failures. Alternatively, \flowsc~\cite{flowsc} employs supervised learning with cross-graph attention to implicitly emulate matching checks. However, its effectiveness relies on sufficient ground truth, which is costly to obtain for hard query workloads.

\section{Our Approach}
\label{sec:overview}
\noindent \textbf{Overview}.
Built upon the candidate tree framework, \asc first incorporates Type-I constraints via a matrix-based scheme (\autoref{sec:Module1}), and utilizes an LCA ratio mechanism to filter out invalid combinatorial weights induced by Type-II constraints (\autoref{sec:lca}). Based on the resulting algebraic homomorphism count, we finally enforce the injectivity constraints via local sampling on subgraphs induced by same-label vertices (\autoref{sec:iso}).

\noindent \textbf{Algorithm Details}.
Algorithm~\ref{alg:ASC} summarizes the counting process of \asc. Given a query graph $Q$ and a data graph $G$, we first construct and refine the candidate space $CS$ using the filtering method \bp~\cite{flowsc}  (\autoref{line:filtering}), and build a BFS spanning tree $T_Q$ of $Q$. 
We choose BFS rather than DFS spanning trees because the former generates more Type-I non-tree edges, whereas the latter generates only Type-II non-tree edges.
Type-I non-tree edges are easier to handle than Type-II ones, as discussed in the following two sections.
Then we compute the LCA ratio for each LCA candidate (\autoref{line:lca_ratio}).

In the counting phase, the parent weight $W(u, v)$ is updated from its child vertices $N_c(u)$ using a matrix computation-based counting method \textsf{SequenceCounting} that incorporates Type-I non-tree constraints in $Q[N_c(u)]$ (\autoref{line:constrained_counting}), where $Q[N_c(u)]$ denotes the subgraph of $Q$ induced by $N_c(u)$.
The Type-II non-tree edge constraints are handled by applying the LCA ratios at LCA vertices (\autoref{line:apply_lca_ratio}). We update weights in reversed BFS order and aggregate them at the root (\autoref{line:homo_counts}) to obtain the subgraph homomorphism count\footnote{As \bp applies injectivity filtering, this is not a strict homomorphism count.}. An isomorphism ratio is then estimated using a local sampling method (\autoref{line:iso_ratio}), and the final subgraph isomorphism count is obtained by multiplying the homomorphism count by this ratio (\autoref{line:iso_counts}).

\begin{algorithm}[t]
    \caption{\textsc{Algebraic Subgraph Counting}}
    \rm\sffamily
    \label{alg:ASC}
    \KwIn{A query graph $Q$, a data graph $G$}
    \KwOut{Estimated subgraph count of $Q$ in $G$}
    \SetKwProg{Fn}{Function}{:}{}

    CS $\gets$ \bp($Q$, $G$)\;
    \label{line:filtering}
    $T_Q$ $\gets$ BuildTree($Q$)\;
    LCARatio($CS$, $T_Q$)\;
    \label{line:lca_ratio}

    \ForEach{$u$ in reversed BFS Traversal of $T_Q$}{
        $s \gets$ BuildSequence($Q[N_c(u)]$)\;
        \ForEach{$v \in C(u)$}{
            $W(u, v) \gets $ SequenceCounting($u,v$,$CS$,$s$)\;
            \label{line:constrained_counting}
            \If{$u$ is a LCA vertex} {
                $W(u, v) \gets W(u, v) \times LCA(u, v)$\;
                \label{line:apply_lca_ratio}
            }
        }
    }

    HomoCounts $\gets$ 0\;
    \ForEach{$v_r \in C(u_r)$ where $u_r$ is the root of $T_Q$}{
        HomoCounts $+= W(u_r, v_r)$\;  
        \label{line:homo_counts}
    }

    IsoRatio $\gets$ LocalSampling($CS$, $Q$)\;
    \label{line:iso_ratio}

    IsoCounts $\gets$ HomoCounts $\times$ IsoRatio\;
    \label{line:iso_counts}
    \Return{IsoCounts}
\end{algorithm}

\section{Handling Type-I Non-tree Edge Constraints}
\label{sec:Module1}
In this section, we introduce the key component of \asc, which is a matrix-based counting model that can approximate the Type-I non-tree edge constraint in candidate tree-based counting.

The tree counting formulation in \autoref{eq:tree_counting} implicitly assumes child vertices are mutually independent, leading to error accumulation when inter-dependencies (i.e., non-tree edge constraints) exist. While exact counting for these constraints among child vertices typically incurs exponential complexity, we develop an algebraic method to simulate this process in polynomial time. By transforming candidate connectivity into transition matrices and candidate weights into weight vectors, we propagate weights via matrix multiplications to simulate weight accumulation in the Cartesian product space (\autoref{sec:simple_case}). For complex structures like cycles among the children, we utilize weight matrices to encode both single and multiple indirect constraints (\autoref{sec:generic_case}). By traversing the children through carefully designed matrix operations, our method incorporates Type-I non-tree constraints into the tree counting, with the resulting vector sum yielding the weight for a parent vertex.

\subsection{A Simplified Counting Case}
\label{sec:simple_case}

\vspace{1mm}
\noindent \textbf{Fixing Candidate Tree-based Counting}.
\autoref{eq:tree_counting} formulates a tree-based counting paradigm where weight updates are computed by decoupling each child branch, implicitly assuming that child vertices are mutually isolated.
To accommodate it to the case where the child vertices are connected (i.e., non-tree edges), we need to inspect each matching combination for the child vertices. 
Specifically, given a query vertex $u$ and its matching vertex $v$,
let $\mathbf{\Omega} = \times_{u_c \in N_c(u)} C(u_c \mid u, v)$ be the Cartesian product of the conditional candidates of $u$'s child vertices. Clearly, $\mathbf{\Omega}$ defines a complete matching space for the child vertices of $u$ when $u \mapsto v$. Based on $\mathbf{\Omega}$, we can compute the weight as follows:
\begin{equation}
    \label{eq:counting_unit}
    W(u, v) = \sum_{\mathbf{v} \in \mathbf{\Omega}} \left( \sigma(\mathbf{v}) \cdot \prod_{u_c \in N_c(u)} W(u_c, \mathbf{v}[u_c]) \right).
\end{equation}
Here, each combination $\mathbf{v} \in \mathbf{\Omega}$ represents a possible matching for vertices in $N_c(u)$, and $\mathbf{v}[u_c]$ denotes the candidate vertex mapped to $u_c$. The term $\sigma$ is an indicator function with value $1$ if the subgraph induced by $\mathbf{v}$ is a homomorphism of $Q[N_c(u)]$, and $0$ otherwise.

\autoref{eq:counting_unit} follows the same sum-product structure as the FAQ based subgraph counting (\autoref{eq:faq_counting}), where the summation enumerates possible vertex mappings and the product enforces structural consistency. Compared with the standard FAQ expression, \autoref{eq:counting_unit} further incorporates candidate-tree-based weights, transforming the generic sum-product query into a recursive weighted aggregation process.

Direct evaluation of \autoref{eq:counting_unit} requires enumerating all combinations in the matching space $\mathbf{\Omega}$, resulting in $O(n^{|N_c(u)|})$ time, which is exponential in $|N_c(u)|$, assuming that the maximum size of $C(u_c \mid u, v)$ is $n$. In the FAQ framework, the \textsf{InsideOut} algorithm reduces the cost through optimized variable elimination, but its complexity still depends exponentially on the fractional FAQ-width of the subquery. To address this challenge, we propose a matrix-based approximation that avoids exhaustive enumeration and produces high-quality counting results in polynomial time.

\vspace{1mm}
\noindent \textbf{Intuition of Matrix Computation}.
The main innovation of our method is transforming the subgraph homomorphism counting into a sequence of matrix multiplication operations. 
Intuitively, adjacency matrix multiplication provides an efficient mechanism for evaluating connectivity between vertex sets. Moreover, its computational rule naturally aligns with Cartesian product operations across vertex sets, where each resulting entry counts the number of connecting paths between the corresponding pair of vertices in the candidate space.
With a carefully designed computation strategy, we can incorporate non-tree constraints via algebraic methods.

\vspace{1mm}
\noindent \textbf{Constrained Sequence}.
Given a query vertex $u$, its child vertices can induce a set of connected components. 
For example, in the $Q$ of \autoref{fig:non_tree}(a), $\{u_1, u_2 \}$ and $\{u_3\}$ are two connected components.
According to the above discussion, each connected component can be processed independently. In the following, we focus on the non-trivial connected components (i.e., containing at least $2$ vertices), since the non-tree edges only exist in such components.

\begin{definition}[\textbf{Constrained Sequence}]
Given a query vertex $u$ in the query tree $T_Q$, let $S$ be a non-trivial connected component induced by a set of vertices in $N_c(u)$ from $Q$.
A constrained sequence $s$ is an ordered arrangement of vertices in $S$ with the following properties.

$\bullet$ Given a vertex $u_c \in S$, a vertex appearing before (Resp., after) $u_c$ in $s$ is called the predecessor (Resp., successor) of $u_c$.

$\bullet$ Given two vertices $u_c', u_c \in S$, $u_c'$ is called the direct predecessor (Resp., direct successor) of $u_c$ if it  appears immediately before (Resp., after) $u_c$ in $s$; Otherwise, $u_c'$ is an indirect predecessor (Resp., indirect successor) of $u_c$ if $u_c'$ is a neighbor of $u_c$.  
\end{definition}

Based on the second property, we note that there must be a non-tree edge between the indirect predecessor-successor vertices, while it is not necessary between the direct ones.

\begin{example}
\label{ex:sequence}
    Consider the $T_{Q_1}$ in \autoref{fig:sequence_counting}(a), where $\langle u_1,u_2,u_3, u_4 \rangle$ is a constrained sequence generated by vertices in $N_c(u_0)$.
    Clearly, $u_1$ is a direct predecessor of $u_2$ and an indirect predecessor of $u_4$.
\end{example}

\vspace{1mm}
\noindent \textbf{Counting via Matrix Multiplication Propagation}.
For ease of exposition, we start by considering a simplified case where $s$ has no indirect predecessor-successor vertices.
Given a query tree $T_Q$ and its corresponding candidate trees $T_C$, let $u$ and $v$ be a pair of matching vertices in the trees.
To compute the weight $W(u,v)$ in \autoref{eq:counting_unit}, we maintain a weight vector for each child vertex $u_c$ of $u$ and a transition matrix for each direct predecessor-successor vertex pair of $u$, which are defined as follows.

\begin{definition}[\textbf{Weight Vector}]
\label{def:weight_vector}
    Given a vertex $u_c$ in the constrained sequence $s$, the weight vector of $u_c$, denoted as $\mathbf{V}$, is a column vector with $|C(u_c \mid u, v)|$ entries, which are initialized as $W(u_c,v_c)$ for each $v_c \in C(u_c \mid u, v)$.
\end{definition}

\begin{definition}[\textbf{Transition Matrix}]
    Given a direct 
    predecessor-successor vertex pair $u_c'$ and $u_c$ in the constrained sequence $s$, the transition matrix between $u_c'$ and $u_c$, denoted as $\mathbf{A}$, is constructed with rows and columns corresponding to $C(u'_c \mid u, v)$ and $C(u_c \mid u, v)$, respectively. 
    This matrix $\mathbf{A}$ encodes the non-tree constraint between $u'_c$ and $u_c$ within the candidate space.
    If a non-tree edge exists between $u'_c$ and $u_c$, then $\mathbf{A}_{ij} = 1$ if there exists a candidate edge between the $i$-th candidate of $u'_c$ and the $j$-th candidate of $u_c$, and $\mathbf{A}_{ij} = 0$ otherwise. If no non-tree edge exists, $\mathbf{A}$ contains all ones, representing that all candidate combinations between $u'_c$ and $u_c$ are considered valid. 
\end{definition}

\begin{example}
Consider the query tree $T_Q$ in \autoref{fig:sequence_counting}(a) and the candidate trees $T_C$ in \autoref{fig:sequence_counting}(b), where $\langle u_1, u_2, u_3, u_4 \rangle$ is a constrained sequence. Note here that we ignore the red edges at this moment.
The weight vectors of vertices are shown in \autoref{fig:sequence_counting}(c). Taking $u_2$ for example, since it is a leaf vertex and has two candidates, i.e., $v_2$ and $v_3$, its weight vector is initialized as $[[1],[1]]$. The transition matrices are shown in \autoref{fig:sequence_counting}(d). 
\end{example}

With the weight vector and transition matrix, we can compute $W(u,v)$ as follows. We start from the last vertex in the constrained sequence and propagate backwards. At each step, the weight vector of the successor vertex is multiplied by the transition matrix and then combined with the predecessor's own weight vector via a Hadamard product.
Therefore, the counts are propagated along the non-tree candidate edges and accumulated into the predecessor's weights. 
We repeat this processing until the head is reached, where 
the sum of the final vector's entries is used to update $W(u, v)$.

\begin{figure*}[t] 
    \centering
    \includegraphics[width=1.0\textwidth]{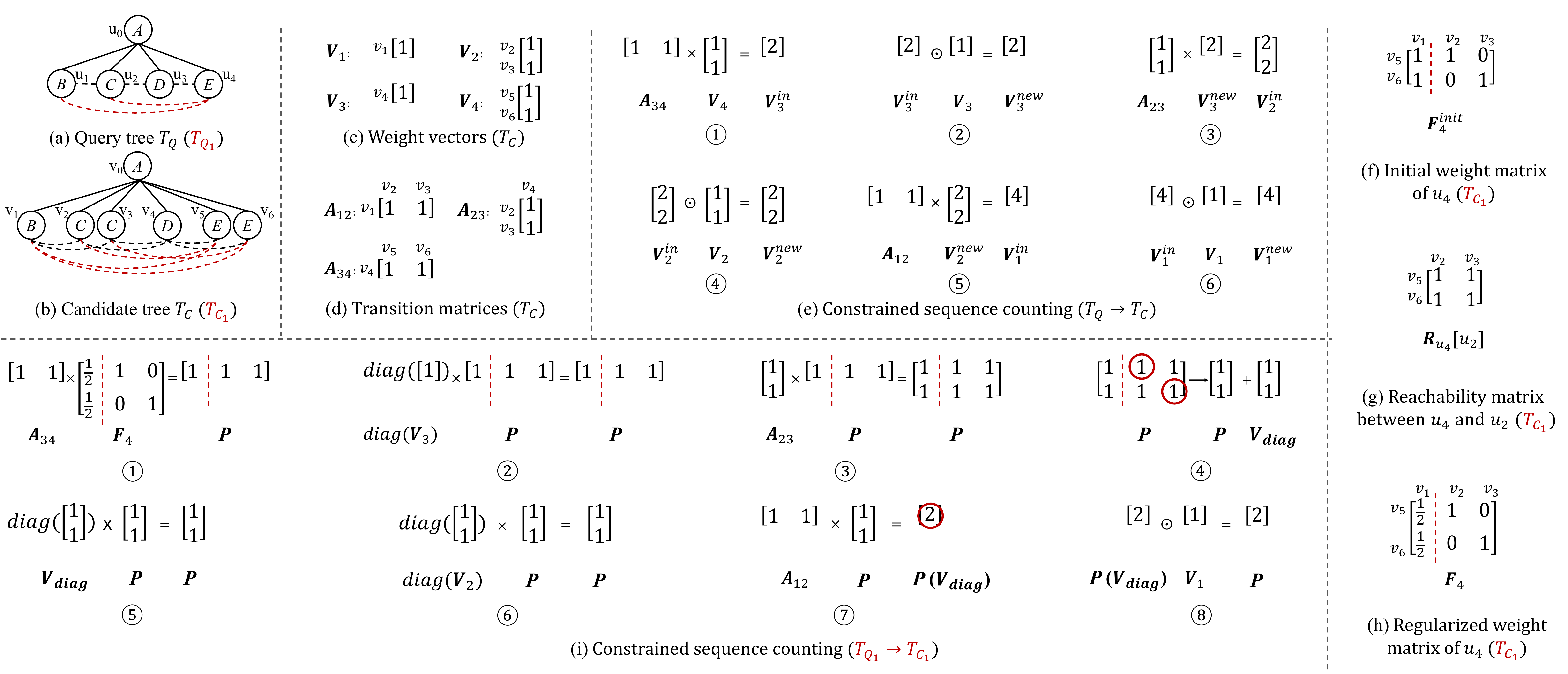}
    \caption{Example of matrix-based counting: dashed edges denote non-tree edges and red edges present only in $T_{Q_1}$ and $T_{C_1}$.}
    \label{fig:sequence_counting}
\end{figure*}

\begin{example} \label{ex:basic_counting}
    The constrained sequence counting starts from the end of sequence, proceeds backward along the sequence, as shown in \autoref{fig:sequence_counting}(e): in Step \textcircled{1}, the weight vector $\mathbf{V}_4$ of $C(u_4 \mid u_0, v_0)$ is multiplied by $\mathbf{A}_{34}$ to produce the input vector $\mathbf{V}_3^{in}$ for $C(u_3 \mid u_0, v_0)$, which are then multiplied element-wise with the original weight vector $\mathbf{V}_3$ in Step \textcircled{2} to obtain the updated vector $\mathbf{V}_3^{new}$. This process continues backwards along the sequence, following the Steps \textcircled{3}-\textcircled{6}. Then, at the head $u_1$, the sum of the entries in $\mathbf{V}_1^{new}$ yields the count of the constrained sequence, which is $4$ and exactly equals the homomorphism count of the constrained sequence.
\end{example}

\begin{lemma}
    Given a constrained sequence $s$ with parent vertex $u$ matched to a candidate $v$, the matrix-based method returns the exact homomorphism count of the induced subgraph $Q[s]$ in the candidate space conditioned on $u \mapsto v$, assuming there are no indirect predecessor-successor connections and no false positive candidates.
\end{lemma}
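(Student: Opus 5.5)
The plan is to show by backward induction along the sequence that each intermediate vector computes a partial sum of \autoref{eq:counting_unit}. Write the constrained sequence as $s=\langle u_1,\dots,u_k\rangle$ and set $C_i = C(u_i\mid u,v)$. When there are no indirect predecessor-successor pairs, every edge of $Q[s]$ joins two consecutive vertices $u_{i-1},u_i$. Hence the indicator factorizes:
\[
\sigma(\mathbf{v})=\prod_{i=2}^{k}\mathbf{A}^{(i-1,i)}_{\mathbf{v}[u_{i-1}],\,\mathbf{v}[u_i]}.
\]
Here $\mathbf{A}^{(i-1,i)}$ is the transition matrix. If $(u_{i-1},u_i)$ is a non-tree edge, it is the candidate-edge indicator; otherwise it is all ones and imposes no constraint. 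The assumption of no false positive candidates is used here: a pair of candidates is adjacent in $G$ exactly when a candidate edge joins them, so the $0/1$ entries coincide with edge membership.

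Define the updated vectors by $\mathbf{V}_k^{new}=\mathbf{V}_k$ and
\[
\mathbf{V}_{i}^{new}=\mathbf{V}_{i}\circ\bigl(\mathbf{A}^{(i,i+1)}\mathbf{V}_{i+1}^{new}\bigr).
\]
The inductive claim is that for every $x\in C_i$,
\[
\mathbf{V}_i^{new}[x]=\sum_{x_{i+1}\in C_{i+1}}\cdots\sum_{x_k\in C_k} W(u_i,x)\prod_{j=i+1}^{k}\mathbf{A}^{(j-1,j)}_{x_{j-1}x_j}W(u_j,x_j),
\]
where $x_i=x$. In words, $\mathbf{V}_i^{new}[x]$ is the weighted homomorphism count of the suffix path $\langle u_i,\dots,u_k\rangle$ with $u_i\mapsto x$.

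The base case $i=k$ is \autoref{def:weight_vector}. For the inductive step, expand
\[
(\mathbf{A}^{(i,i+1)}\mathbf{V}_{i+1}^{new})[x]=\sum_{y\in C_{i+1}}\mathbf{A}^{(i,i+1)}_{xy}\,\mathbf{V}_{i+1}^{new}[y].
\]
Substitute the induction hypothesis for $\mathbf{V}_{i+1}^{new}[y]$ and multiply by $W(u_i,x)$ from the Hadamard product. Distributivity then gives the formula for index $i$. This is the standard path-graph variable elimination, with each summation variable appearing in only two consecutive factors.

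Finally, $\sum_{x\in C_1}\mathbf{V}_1^{new}[x]$ is a sum over all of $\mathbf{\Omega}$ restricted to $s$, of $\sigma(\mathbf{v})\prod_j W(u_j,\mathbf{v}[u_j])$. This is exactly \autoref{eq:counting_unit} restricted to the component. With leaf weights equal to $1$, it is the homomorphism count of $Q[s]$ conditioned on $u\mapsto v$. More generally, with subtree weights included, it is the weighted count that feeds $W(u,v)$. Independence across components justifies treating $s$ in isolation.

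The main obstacle is the factorization of $\sigma$. It relies on every constraint of $Q[s]$ being between consecutive vertices. It also relies on the conditional candidate sets already encoding the tree edges to $u$, so that no further constraint involving $u$ is lost. Both follow from the stated hypotheses and the definition of $C(\cdot\mid u,v)$. Once these are made explicit, the rest is bookkeeping with distributivity.
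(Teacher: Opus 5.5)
Your proposal is correct and follows the same route as the paper. Both argue by backward propagation along $s$: the product $\mathbf{A}\mathbf{V}$ collects the weights of successor candidates that satisfy the one consecutive-pair constraint, the Hadamard product multiplies in the predecessor's own weight, and induction shows each valid combination is counted exactly once. Your explicit suffix-sum invariant for $\mathbf{V}_i^{new}[x]$ and the factorization of $\sigma$ over consecutive pairs make rigorous what the paper states informally as ``each step incorporates exactly one non-tree edge constraint.''
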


\begin{proof}
    For any predecessor-successor pair $(u'_c, u_c)$ in the sequence $s$, the matrix-vector product $\mathbf{V}_{u'_c}^{in} = \mathbf{A}_{u'_cu_c}\times\mathbf{V}_{u_c}$ exclusively collects weights from candidates in $C(u_c \mid u, v)$ that form valid matching with candidates in $C(u'_c \mid u, v)$ under the non-tree constraint of $(u'_c, u_c)$. 
    The Hadamard product $\mathbf{V}_{u'_c}^{new} = \mathbf{V}_{u'_c} \odot \mathbf{V}_{u'_c}^{in}$ updates the weight vector of $u'_c$ by multiplying each candidate's weight in $\mathbf{V}_{u'_c}$ 
    with the sum of weights propagated from the candidates in $C(u_c \mid u, v)$ via $\mathbf{A}_{u'_cu_c}$.
    Since $C(u'_c \mid u, v)$ and $C(u_c \mid u, v)$ are restricted to candidates under the condition that $u$ matches $v$, this weight accumulation simultaneously satisfies both constraints imposed by the tree-edges from $u$ and the non-tree edge of $(u'_c, u_c)$.
    
    \noindent\textbf{Induction:} By repeating the propagation step backwards through the sequence, each step incorporates exactly one non-tree edge constraint. The final value is a sum of product terms where each term represents a combination of candidates that satisfies all tree and non-tree constraints. Since each valid combination is counted exactly once through these algebraic operations, the final summation of entries in the head vector yields the exact total count.
\end{proof}

\subsection{Extending to Generic Case}
\label{sec:generic_case}

\noindent \textbf{Motivation}.
When a query vertex has an indirect predecessor, the indirect and direct edges along the sequence jointly impose constraints on the candidate matching. For example, in the query tree $T_{Q_1}$ shown in \autoref{fig:sequence_counting}(a), where the red edges are included in the tree, $u_4$ has $2$ indirect predecessors $u_1$ and $u_2$. The indirect edges $(u_1, u_4), (u_2, u_4)$ and the sequence $\langle u_1, u_2, u_3, u_4 \rangle$ jointly impose matching constraints on this sequence.

We propose to use a weight matrix to encode the indirect constraints on a vertex, and further approximate the interactions among these constraints through an enhanced regularized weight matrix. With carefully designed propagation rules, the weight matrix accumulates valid weights during the traversal of the constrained sequence to support approximate counting with indirect constraints.

\vspace{1mm}
\noindent \textbf{Incorporating Indirect Predecessor Constraints}.
In the following, we proceed to deal with the indirect predecessor-successor relationships between vertices presented in the constrained sequence. For a vertex with indirect predecessors, we need to properly take the connection constraints brought by such predecessors into consideration during the matrix propagation. To this end, we introduce the concept of \textit{weight matrix}, which is essentially an advanced version of the weight vector by integrating both the candidate weights and the connection status with the indirect predecessors. 

\begin{definition}[\textbf{Weight Matrix}]
\label{def:weight_matrix}
    Given a vertex $u_c$ in the constrained sequence $s$ with $m$ indirect predecessors $\langle u_{p_1}, ... ,u_{p_m} \rangle$ ordered as in $s$, the weight matrix of $u_c$, $\mathbf{F}$, consists of $m$ column-wise concatenated blocks, where the $k$-th block $\mathbf{F}[u_{p_k}]$ corresponds to $u_{p_k}$:

$\bullet$ For an indirect predecessor $u_{p}$, the corresponding block $\mathbf{F}[u_p]$ is a matrix with $|C(u_c \mid u, v)|$ rows and $|C(u_{p} \mid u, v)|$ columns, where an entry $\mathbf{F}[u_p]_{ij}=1$ if there is a candidate edge between the $i$-th candidate of $u_c$ and the $j$-th candidate of $u_{p}$, and $0$ otherwise.

$\bullet$ For the block corresponding to the first indirect predecessor $u_{p_1}$, we further set its nonzero entries to the corresponding candidate weights $W(u_c, v_c)$ of $u_c$. The weight accumulation in the propagation process is performed through this primary block, while the remaining blocks are used to verify connectivity for non-tree constraints. This ensures that the candidate weights of $u_c$ are accumulated exactly once in the Cartesian-product space. 
\end{definition}

\begin{example}
\label{ex:initial_feature_matrix}
    For the constrained sequence $\langle u_1, u_2, u_3, u_4 \rangle$ in $T_{Q_1}$ shown in \autoref{fig:sequence_counting}(a) and its corresponding candidate trees $T_{C_1}$ in \autoref{fig:sequence_counting}(b), the initial weight matrix $\mathbf{F}_4^{init}$ of $u_4$ is presented in \autoref{fig:sequence_counting}(f), where the red dashed line separates the blocks corresponding to different indirect predecessors. 
    Since $u_4$ has two indirect predecessors $u_1$ and $u_2$, its weight matrix has two blocks.
    Considering $u_1$, since there is only one vertex $v_1$ in $C(u_1 \mid u_0, v_0)$ which connects to both candidates $v_5$ and $v_6$ in $C(u_4 \mid u_0, v_0)$, the corresponding block is a $2 \times 1$ matrix of ones, representing both connectivity and candidate weights.
    The block for $u_2$ is constructed similarly based on connectivity only.

\end{example}

\vspace{1mm}
\noindent \textbf{Regularizing Weight Matrix}.
Recall that the overall counting process is formulated as a matrix-based propagation backwards along the constrained sequence. 
Based on \autoref{def:weight_matrix}, the candidate weights of the successor and the non-tree constraint from its first indirect predecessor are embedded in the first block of the successor's weight matrix $\mathbf{F}$, where weight accumulation in propagation takes place.
The constraints from other indirect predecessors are encoded in separate blocks. 
These constraints from different indirect predecessors cannot be applied to the successor's candidates simultaneously for valid weight accumulation during the propagation process.

\begin{example}
    In $T_{Q_1}$ shown in \autoref{fig:sequence_counting}(a), both $u_1$ and $u_2$ are indirect predecessors of $u_4$. In the candidate tree $T_{C_1}$ in \autoref{fig:sequence_counting}(b), the weights of $u_4$'s candidates $v_5$ and $v_6$ can propagate backwards along the sequence to $v_2$ and $v_3$. This propagation satisfies the indirect constraint between $u_1$ and $u_4$, since $v_1$ connects to both $v_5$ and $v_6$, but it violates the indirect constraint between $u_2$ and $u_4$, because $v_2$ is not adjacent to $v_6$, and $v_3$ is not adjacent to $v_5$. The indirect constraints of $(u_1, u_4)$ and $(u_2, u_4)$ are encoded in different blocks of the weight matrix, which may not be satisfied simultaneously on $v_5$ and $v_6$.
\end{example}

To jointly enforce the constraints from multiple indirect predecessors on a successor without introducing complex multi-constraint checks during propagation, we develop regularization techniques for the weight matrix.
Specifically, we apply a scaling ratio matrix to the block of the first indirect predecessor in the weight matrix to approximately offset invalid weights before the propagation begins. Essentially, each ratio estimates how much of the weight propagated from a successor candidate to the corresponding candidate of the first indirect predecessor remains valid after jointly enforcing all indirect-predecessor constraints along the sequence.
To facilitate the computation, we make use of the following definition \textit{reachability matrix}, which records the reachability situation between the candidates of two query vertices. 

\begin{definition}[\textbf{Reachability Matrix}]
Given two vertices $u_c$ and $u_c'$ in the constrained sequence,
the reachability matrix between $u_c$ and $u_c'$, denoted as $\mathbf{R}_{u_c}[u_c']$, maintains the reachability from vertices in $C(u_c \mid u, v)$ to $C(u_c' \mid u, v)$ along the constrained sequence $s$, which can be easily obtained via successive multiplying the transition matrices between consecutive vertices in the constrained sequence. 
\end{definition}

\begin{example}
    \autoref{fig:sequence_counting}(g) shows the reachability matrix $\mathbf{R}_{u_4}[u_2]$ from candidate of $u_4$ to the candidates of $u_2$, which is obtained by binarizing $(\mathbf{A}_{23} \times \mathbf{A}_{34})^{\top}$, where all non-zero entries are set to $1$. It indicates that both candidates $v_5$ and $v_6$ can reach $v_2$ and $v_3$ in the candidate space along the constrained sequence.
\end{example}

Now, given a vertex $u_c$ in the constrained sequence with indirect predecessors $\langle u_{p_1}, ... ,u_{p_m} \rangle$ and its initial weight matrix $\mathbf{F}$, we use $\mathbf{F}[u_{p_i}]$ to denote the matrix block corresponding to the indirect predecessor $u_{p_i}$.
With the reachability matrix, we update the weight matrix as follows.
 
\begin{equation}
\label{eq:updated_feature}
\begin{aligned}
    \mathbf{F}[u_{p_i}] \leftarrow & \mathbf{F}[u_{p_i}] \odot ( \text{diag}(\mathbf{R}_{u_c}[u_{p_{i+1}}] \cdot \mathbf{1})^{-1} \cdot \\
    & ( (\mathbf{F}[u_{p_{i+1}}] \odot \mathbf{R}_{u_c}[u_{p_{i+1}}] ) \cdot \mathbf{R}_{u_{p_{i+1}}}[u_{p_i}] ) ).
\end{aligned}
\end{equation}

In \autoref{eq:updated_feature}, the state at $u_{p_i}$ depends on $u_{p_{i+1}}$. We examine the reachability of each candidate of $u_c$ to candidates at $u_{p_{i+1}}$ with $\mathbf{R}_{u_c}[u_{p_{i+1}}]$. Reachable candidates at $u_{p_{i+1}}$ receive the accumulated weights propagated from $u_c$'s candidates. Accordingly, multiplying by an all-ones vector $\mathbf{1}$ yields the number of reachable candidates at $u_{p_{i+1}}$ from each candidate of $u_c$. The resulting vector is broadcast via the $diag(\cdot)$ operation to form a diagonal matrix, whose diagonal entries serve as the denominator for the corresponding row in the ratio computation. The numerator accounts for how much of the accumulated weight propagated from $u_c$'s candidates satisfies the indirect constraints at $u_{p_{i+1}}$ and can further propagate to $u_{p_i}$. This is computed in the second line of \autoref{eq:updated_feature}: $\mathbf{F}[u_{p_{i+1}}]$ encodes the indirect constraints of $u_{p_{i+1}}$, while $\mathbf{R}_{u_c}[u_{p_{i+1}}]$ captures reachability from $u_c$ to $u_{p_{i+1}}$. Candidates that are both reachable and indirect constraint-satisfying are propagated to $u_{p_i}$ through $\mathbf{R}_{u_{p_{i+1}}}[u_{p_i}]$, forming the numerator of the ratio computation. The resulting ratio matrix is combined with $\mathbf{F}[u_{p_i}]$ via the Hadamard product, thereby incorporating the indirect constraints at $u_{p_i}$ and producing an updated matrix block, which is then used to update $u_{p_{i-1}}$. Since only the first block of $\mathbf{F}$ contains candidate weights of $u_c$, we retain only the regularized update of this block in \autoref{eq:updated_feature}. The remaining blocks encode connectivity, where only the zero/non-zero distinction matters, so their original values are kept.

\begin{example}
\label{ex:updated_initial_feature}
    Based on the initial feature matrix in \autoref{ex:initial_feature_matrix},
    we apply a scaling ratio matrix on the block $\mathbf{F}_4^{init}[u_1]$ of $u_1$.
    For the reachability matrix $\mathbf{R}_{u_4}[u_2]$ shown in \autoref{fig:sequence_counting}(g), we compute the row sums, resulting in the vector $[[2], [2]]$, which serves as the per-row denominators of the ratio matrix. Next, we take the Hadamard product of $\mathbf{R}_{u_4}[u_2]$ and $\mathbf{F}_4^{init}[u_2]$, obtaining $[[1, 0], [0, 1]]$, which incorporates the indirect constraint on $(u_2, u_4)$. To evaluate how much of the accumulated weight can further propagate to $C(u_1 \mid u_0, v_0)$, we multiply the obtained matrix by $\mathbf{R}_{u_2}[u_1] = [[1], [1]]$, yielding $[[1], [1]]$. Dividing each entry by its corresponding row denominator produces the ratio matrix $[[1/2], [1/2]]$. Finally, the ratio matrix is applied to $\mathbf{F}_4^{init}[u_1]$ via Hadamard product, which further incorporates the indirect constraint between $(u_1, u_4)$. The resulting matrix is the final feature matrix $\mathbf{F}_4$, as shown in \autoref{fig:sequence_counting}(h).
\end{example}

\vspace{1mm}
\noindent \textbf{Propagating Weight Matrix}.
After setting the weight matrix properly, we propagate it via matrix multiplication along the reverse direction of the constrained sequence until reaching the first vertex, where the accumulated weights are extracted as the homomorphism count.
Because each vertex in the constrained sequence might have its own weight matrix, to avoid ambiguity, we call the propagated weight matrix as \textit{passing matrix} $\mathbf{P}$.
During the propagation, we need to execute the following two types of operations accordingly, including \textit{weight aggregation} and \textit{weight extraction}.

\vspace{1mm}
$\bullet$ \textit{Weight Aggregation.}
Let $u_c$ be the current vertex in the constrained sequence, and $u_c'$ its direct predecessor during the propagation. We incorporate the weight matrix $\mathbf{F}$ of $u_c'$ into the current passing matrix $\mathbf{P}$ to update the accumulated weights, as detailed in \autoref{alg:weight_aggregation}. Initially, we handle the boundary cases where either matrix acts as a vector because it lacks indirect predecessors.

\noindent \textit{\underline{Case 1: $\mathbf{P}$ and $\mathbf{F}$ have no associated indirect predecessors.}}
In this scenario, since $\mathbf{P}$ has not accumulated any indirect predecessors along the propagation path (\autoref{line:p_is_vector}), both $\mathbf{P}$ and $\mathbf{F}$ act purely as weight vectors (\autoref{def:weight_vector}). Consequently, we simply compute their Hadamard product (\autoref{line:vector_mul}), identical to step \textcircled{2} of \autoref{ex:basic_counting}.

\noindent \textit{\underline{Case 2: $\mathbf{F}$ has associated indirect predecessors but $\mathbf{P}$ does not.}}
$\mathbf{F}$ functions as a weight matrix while $\mathbf{P}$ remains a weight vector (\autoref{line:f_not_vector}) under this condition. We row-scale the first block of $\mathbf{F}$ by $\mathbf{P}$, and then $\mathbf{F}$ takes over as the updated passing matrix (\autoref{line:p_scale_f}--\autoref{line:p_becomes_f}).

\noindent \textit{\underline{Case 3: $\mathbf{P}$ has associated indirect predecessors, but $\mathbf{F}$ does not.}}
$\mathbf{P}$ has already become a matrix but $\mathbf{F}$ remains a vector (\autoref{line:if_no_indirect}) in this scenario, we directly apply $\mathbf{F}$ to the first block of $\mathbf{P}$ via row-wise scaling to accumulate the candidate weights (\autoref{line:base_case_mul}).

\noindent \textit{\underline{Case 4: both $\mathbf{P}$ and $\mathbf{F}$ have associated indirect predecessors.}}
$\mathbf{P}$ and $\mathbf{F}$ both function as weight matrices (\autoref{def:weight_matrix}) in this scenario. We identify their predecessor sets $\mathcal{U}_P$ and $\mathcal{U}_F$, along with their earliest predecessor $u_{P,1}$ and $u_{F,1}$. We first process any overlapping predecessors ($u \in \mathcal{U}_P \cap \mathcal{U}_F$) by performing an element-wise Hadamard product on their corresponding blocks. This step merges their connectivity and weights simultaneously ( \autoref{line:overlap_start}--\autoref{line:overlap_end}).
A critical requirement is that only the block corresponding to the earliest indirect predecessor in the constrained sequence acts as the primary block carrying the accumulated weights. To ensure this, we compare the sequence positions of $u_{F,1}$ and $u_{P,1}$. If $u_{F,1}$ appears earlier ($u_{F,1} \prec u_{P,1}$) in the constrained sequence, $\mathbf{F}[u_{F,1}]$ becomes the new primary block. We extract the row weights $\mathbf{v}_{\text{row}}$ from the former primary block $\mathbf{P}[u_{P,1}]$, row-broadcast them into $\mathbf{F}[u_{F,1}]$, and degrade $\mathbf{P}[u_{P,1}]$ into a pure 0/1 connectivity block via binarization (\autoref{line:first_block_f}--\autoref{line:first_block_f_end}). Conversely, if $u_{P,1}$ remains the primary block, we extract the weights from $\mathbf{F}[u_{F,1}]$, apply them to $\mathbf{P}[u_{P,1}]$, and binarize $\mathbf{F}[u_{F,1}]$ (\autoref{line:first_block_p}--\autoref{line:first_block_p_end}). 
Finally, any remaining non-overlapping blocks belonging solely to the weight matrix ($u \in \mathcal{U}_F \setminus \mathcal{U}_P$) are appended to the passing matrix $\mathbf{P}$ column-wise, expanding its predecessor set accordingly (\autoref{line:append_start}--\autoref{line:append_end}).

\begin{algorithm}[t]
    \caption{\textsc{WeightAggregation}}
    \rm\sffamily
    \label{alg:weight_aggregation}
    \KwIn{Passing matrix $\mathbf{P}$ at $u_c$, Weight matrix $\mathbf{F}$ of current direct predecessor $u_c'$}
    \KwOut{Updated passing matrix $\mathbf{P}$}

    \If{$\mathbf{P}$ has no associated indirect predecessors}{ \label{line:p_is_vector}
        \If(\hfill \CommentSty{// Case 1}){$u_c'$ has no indirect predecessor}{ \label{line:if_no_indirect}
            $\mathbf{P} \gets \mathbf{P} \odot \mathbf{F}$\; \label{line:vector_mul}
            \Return{$\mathbf{P}$}
        }
        $\mathcal{U}_F \gets \text{indirect predecessors of blocks in } \mathbf{F}$ \tcp*{Case 2} \label{line:f_not_vector}
        $u_{F,1} \gets \text{the first indirect predecessor in } \mathcal{U}_F$\;
        $\mathbf{F}[u_{F,1}] \gets \mathbf{F}[u_{F,1}] \otimes_{\text{row}} \mathbf{P}$\;  \label{line:p_scale_f}
        \Return{$\mathbf{F}$} \label{line:p_becomes_f}
    }

    $\mathcal{U}_P \gets \text{indirect predecessors of blocks in } \mathbf{P}$\; \label{line:init_up}
    $u_{P,1} \gets \text{the first indirect predecessor in } \mathcal{U}_P$\;
    
    \If(\hfill \CommentSty{// Case 3}){$u_c'$ has no indirect predecessor}{ \label{line:if_no_indirect}
        $\mathbf{P}[u_{P,1}] \gets \mathbf{P}[u_{P,1}] \otimes_{\text{row}} \mathbf{F}$\;  \label{line:base_case_mul}
        \Return{$\mathbf{P}$} \label{line:base_case_return}
    }

    $\mathcal{U}_F \gets \text{indirect predecessors of blocks in } \mathbf{F}$ \tcp*{Case 4} \label{line:init_uf}
    $u_{F,1} \gets \text{the first indirect predecessor in } \mathcal{U}_F$\;
    
    \ForEach{$u \in \mathcal{U}_P \cap \mathcal{U}_F$}{ \label{line:overlap_start}
        $\mathbf{F}[u] \gets \mathbf{P}[u] \odot \mathbf{F}[u]$\;
        $\mathbf{P}[u] \gets \mathbf{F}[u]$\; \label{line:overlap_end}
    }

    \If{$u_{F,1} \prec u_{P,1}$}{ \label{line:first_block_f}
        $\mathbf{v}_{\text{row}} \gets \textsc{ExtractRowWeights}(\mathbf{P}[u_{P,1}])$\;
        $\mathbf{F}[u_{F,1}] \gets \mathbf{F}[u_{F,1}] \otimes_{\text{row}} \mathbf{v}_{\text{row}}$\;
        $\mathbf{P}[u_{P,1}] \gets \textsc{Binarize}(\mathbf{P}[u_{P,1}])$\;
        $\mathbf{P}[u_{F,1}] \gets \mathbf{F}[u_{F,1}]$\; \label{line:first_block_f_end}
    }
    \Else{ \label{line:first_block_p}
        $\mathbf{v}_{\text{row}} \gets \textsc{ExtractRowWeights}(\mathbf{F}[u_{F,1}])$\;
        $\mathbf{P}[u_{P,1}] \gets \mathbf{P}[u_{P,1}] \otimes_{\text{row}} \mathbf{v}_{\text{row}}$\;
        $\mathbf{F}[u_{F,1}] \gets \textsc{Binarize}(\mathbf{F}[u_{F,1}])$\;
        \If{$u_{F,1} \in \mathcal{U}_P$}{
            $\mathbf{P}[u_{F,1}] \gets \mathbf{F}[u_{F,1}]$\; \label{line:first_block_p_end}
        }
    }

    \ForEach{$u \in \mathcal{U}_F \setminus \mathcal{U}_P$}{ \label{line:append_start}
        $\textsc{Append}(\mathbf{P}, \mathbf{F}[u])$\;
        $\mathcal{U}_P \gets \mathcal{U}_P \cup \{u\}$\; \label{line:append_end}
    }

    \Return{$\mathbf{P}$}
\end{algorithm}

\vspace{1mm}
$\bullet$ \textit{Weight Extraction.}
During backward propagation along the constrained sequence, if the predecessor $u'_c$ recorded in the current passing matrix $\mathbf{P}$ is encountered, it signifies a \textit{cycle closure}. We perform a weight extraction operation as detailed in \autoref{alg:weight_extraction}.

At this stage, the block $\mathbf{P}[u'_c]$ forms a square matrix. Its diagonal elements indicate whether the corresponding candidates of $u'_c$ can successfully complete a cycle traversal under the indirect constraints, as in the cycle $(u_2 \rightarrow u_4 \rightarrow u_3 \rightarrow u_2)$ of $T_{Q_1}$ in \autoref{fig:sequence_counting}(a). We extract these diagonal elements into a vector $\mathbf{v}_{\text{diag}}$ (\autoref{line:extract_diag}) to capture the valid states of $u'_c$'s candidates. Since the information at $u'_c$ has now been fully evaluated and aggregated, we simplify the passing matrix by completely removing the block corresponding to $u'_c$ (\autoref{line:remove_block}). The subsequent update of $\mathbf{P}$ depends on whether other indirect predecessors remain.

\noindent \textit{\underline{Case 1: Other blocks remain in $\mathbf{P}$.}}
In this scenario (\autoref{line:check_remaining}), the presence of remaining blocks implies that $u'_c$'s block was not the primary carrier of the accumulated weights. Nevertheless, its constraint validation is crucial. Therefore, we binarize the extracted vector to retain only connectivity information (\autoref{line:binarize_diag}) and apply it via row-wise scaling to the first block of $\mathbf{P}$ (\autoref{line:row_scale_diag}). This operation effectively filters out the accumulated weights of candidates that fail to satisfy the cycle constraints.

\noindent \textit{\underline{Case 2: $u'_c$'s block was the only block in $\mathbf{P}$.}}
In this scenario (\autoref{line:no_remaining}), the extracted vector $\mathbf{v}_{\text{diag}}$ fully represents the newly accumulated weights. In this case, $\mathbf{v}_{\text{diag}}$ entirely becomes the updated passing matrix $\mathbf{P}$ (\autoref{line:p_becomes_vector}) and will continue propagating backward along the constrained sequence.

\begin{algorithm}[t]
    \caption{\textsc{WeightExtraction}}
    \rm\sffamily
    \label{alg:weight_extraction}
    \KwIn{Passing matrix $\mathbf{P}$ at $u_c$, Current direct predecessor $u_c'$}
    \KwOut{Updated passing matrix $\mathbf{P}$}

    $\mathcal{U}_P \gets \text{indirect predecessors of blocks in } \mathbf{P}$\;

    \If{$u_c' \in \mathcal{U}_P$}{ \label{line:check_uc_prime}
        $\mathbf{v}_{\text{diag}} \gets \textsc{ExtractDiag}(\mathbf{P}[u_c'])$\; \label{line:extract_diag}
        $\mathcal{U}_P \gets \mathcal{U}_P \setminus \{u_c'\}$\;
        $\textsc{RemoveBlock}(\mathbf{P}, u_c')$\; \label{line:remove_block}

        \If(\hfill \CommentSty{// Case 1}){$\mathcal{U}_P \neq \emptyset$}{ \label{line:check_remaining}
            $u_{P,1} \gets \text{the first indirect predecessor in } \mathcal{U}_P$\;
            $\mathbf{v}_{\text{diag}} \gets \textsc{Binarize}(\mathbf{v}_{\text{diag}})$\; \label{line:binarize_diag}
            $\mathbf{P}[u_{P,1}] \gets \mathbf{P}[u_{P,1}] \otimes_{\text{row}} \mathbf{v}_{\text{diag}}$\; \label{line:row_scale_diag}
        }
        \Else(\hfill \CommentSty{// Case 2}){ \label{line:no_remaining}
            $\mathbf{P} \gets \mathbf{v}_{\text{diag}}$\; \label{line:p_becomes_vector}
        }
    }
    
    \Return{$\mathbf{P}$} \label{line:return_p}
\end{algorithm}

\Cref{alg:sequence_counting} outlines the overall procedure for computing the candidate weight $W(u, v)$ under both sequence constraints and standard tree constraints. The algorithm performs a backward propagation along the constrained sequence $s$.

We initialize the process at the last vertex $u_c$ of sequence $s$, setting its weight matrix as the initial state of the passing matrix $\mathbf{P}$ (\autoref{line:init_uc}--\autoref{line:init_p}). During the reversed traversal of $s$, for each adjacent pair $(u_c', u_c)$, we first propagate the passing matrix $\mathbf{P}$ from $u_c$ to its direct predecessor $u'_c$ by multiplying it with the transition matrix $\mathbf{A}_{u'_c, u_c}$ (\autoref{line:matrix_mult}). Then, we generate the weight matrix $\mathbf{F}$ for $u'_c$ (\autoref{line:get_weight}). If $u'_c$ closes a cycle—meaning it is an indirect predecessor previously visited—we refine the passing matrix via the \textsc{WeightExtraction} operation to capture cycle constraints (\autoref{line:extraction}). Subsequently, the state is updated by fusing $\mathbf{P}$ and $\mathbf{F}$ via the \textsc{WeightAggregation} operation (\autoref{line:aggregation}).

Once the propagation reaches the head vertex of $s$, the finalized passing matrix $\mathbf{P}$ degrades into a vector containing the accumulated valid weights. We sum all elements of this vector to obtain the total sequence-constrained weight $w_{\text{con}}$ (\autoref{line:sum_con}). Furthermore, the query vertex $u$ may also possess isolated children that are completely independent of any Type-I non-tree constraints. We compute their normal candidate weight $w_{\text{norm}}$ using \autoref{eq:tree_counting} (\autoref{line:sum_norm}). Finally, the product of $w_{\text{con}}$ and $w_{\text{norm}}$ yields the final matching weight $W(u, v)$ (\autoref{line:final_weight}).

\begin{algorithm}[t]
    \caption{\textsc{SequenceCounting}}
    \rm\sffamily
    \label{alg:sequence_counting}
    \KwIn{$u, v$, $CS$, $s$}
    \KwOut{$W(u, v)$}

    $u_c \gets \text{the last vertex in } s$\; \label{line:init_uc}
    $\mathbf{P} \gets \textsc{WeightMatrix}(u_c)$\; \label{line:init_p}

    \ForEach{$(u_c', u_c)$ in the reversed traversal of $s$}{ \label{line:backward_loop}
        $\mathbf{P} \gets \mathbf{A}_{u_c' u_c} \mathbf{P}$\; \label{line:matrix_mult}
        $\mathbf{F} \gets \textsc{WeightMatrix}(u_c')$\; \label{line:get_weight}

        \If{$u_c'$ closes a cycle}{ \label{line:check_cycle}
            $\mathbf{P} \gets \textsc{WeightExtraction}(\mathbf{P}, u_c')$\; \label{line:extraction}
        }

        $\mathbf{P} \gets \textsc{WeightAggregation}(\mathbf{P}, \mathbf{F}, u_c')$\; \label{line:aggregation}
    }

    $w_{\text{con}} \gets \textsc{Sum}(\mathbf{P})$\; \label{line:sum_con}
    $w_{\text{norm}} \gets \textsc{IsolatedChildrenWeight}(u, v)$\; \label{line:sum_norm}
    $W(u, v) \gets w_{\text{con}} \times w_{\text{norm}}$\; \label{line:final_weight}

    \Return{$W(u, v)$}
\end{algorithm}

\begin{example}
    Continuing \autoref{ex:updated_initial_feature}, where the weight matrix $\mathbf{F}_4$ is constructed in \autoref{fig:sequence_counting}(h). The propagation process is illustrated in \autoref{fig:sequence_counting}(i). Starting from $u_4$, the weight matrix $\mathbf{F}_4$is propagated to $u_3$ by multiplication with $\mathbf{A}_{34}$, illustrated in Step \textcircled{1}. The weight vector of $u_3$, which encodes the weights of its candidates, is then incorporated into the first block of $\mathbf{P}$ through row-broadcast as specified in \autoref{line:base_case_mul} of \autoref{alg:weight_aggregation}, producing the updated passing matrix in Step \textcircled{2}. 
    The passing matrix is subsequently propagated backward to $u_2$ through $\mathbf{A}_{23}$, obtaining $\mathbf{P}$ in \textcircled{3}. Since $u_2$ is the indirect predecessor of $u_4$, whose constraint on $u_4$ is enforced in the passing matrix, we extract the block of $u_2$. 
    In Step \textcircled{4}, the diagonal (circled in red) of the block corresponding to $u_2$ is extracted, obtaining $\mathbf{V}_{diag}$ (\autoref{line:extract_diag} of \autoref{alg:weight_extraction}), and then this block is removed from $\mathbf{P}$ (\autoref{line:remove_block} of \autoref{alg:weight_extraction}). $\mathbf{V}_{diag}$ is first binarized (\autoref{line:binarize_diag} of \autoref{alg:weight_extraction}) and applied to the first block of $\mathbf{P}$ (\autoref{line:row_scale_diag} of \autoref{alg:weight_extraction}), yielding the updated $\mathbf{P}$ after weight extraction as shown in Step \textcircled{5}. In Step \textcircled{6}, the initial weight vector $\mathbf{V}_2$ of $u_2$ is row-scaled into $\mathbf{P}$, which completes the weight aggregation at $u_2$ and produces the updated $\mathbf{P}$.
    After that, the obtained $\mathbf{P}$ continues to propagate backward to $u_1$ through $\mathbf{A}_{12}$. At this point, 
    $u_1$ is the only remaining indirect predecessor encoded in the passing matrix, the diagonal (circled in red) of $\mathbf{P}$ is extracted to form the vector $\mathbf{V}_{diag}$ as shown in Step \textcircled{7}, which is then combined with $\mathbf{V}_1$ via a Hadamard product.
    Finally, summing the entries of $\mathbf{P}$ in Step \textcircled{8} yields the constrained sequence count, which equals to $2$. This value is exactly the true homomorphism count of the constrained sequence.
    
\end{example}

\noindent \textbf{Remark}.
It should be remarked that the homomorphism counts obtained from the constrained sequence counting are not exact. When the constrained sequence contains indirect constraints, multiple indirect constraints can simultaneously affect a single vertex. Although the regularized weight matrix is introduced to precompute such joint constraint effects before propagation, it can only approximate the proportion of valid weights propagated based on the connected paths. In fact, the valid weight contributed by each connected path is different. The counting rules for such a complicated structure might be further optimized in future work.

\subsection{Complexity Analysis}

\begin{theorem}
    Given a connected component $S$ under a query vertex in the query tree, the time complexity of the overall counting process for $S$ is polynomial to the size of $S$ and the candidate space.
\end{theorem}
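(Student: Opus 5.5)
The plan is a cost accounting of the procedures in \Cref{alg:sequence_counting}, \Cref{alg:weight_aggregation} and \Cref{alg:weight_extraction}, together with the regularization in \autoref{eq:updated_feature}. Let $k = |S|$, and let $n$ be the maximum size of a conditional candidate set $C(u_c \mid u, v)$ over $u_c \in S$; note $n \le |V_G|$. The central claim is an invariant: at every point of the backward propagation, the passing matrix $\mathbf{P}$ has $n$ rows and at most $k$ blocks, each with at most $n$ columns, so its shape is at most $n \times kn$. This holds because the blocks are indexed by distinct indirect predecessors drawn from $S$. \textsc{WeightAggregation} only merges overlapping blocks or appends blocks for new predecessors in $\mathcal{U}_F \setminus \mathcal{U}_P$, so no predecessor is ever duplicated. \textsc{WeightExtraction} only removes blocks. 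Hence the matrices never grow beyond a polynomial size, and this is the key point separating the method from the $O(n^{k})$ enumeration of \autoref{eq:counting_unit}.

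\textbf{Preprocessing.} Building the constrained sequence $s$ on $Q[S]$ takes $O(k^2)$ time. Each transition matrix $\mathbf{A}$ is $n\times n$, and can be filled from the candidate edges in $O(n^2)$ time. Each weight matrix $\mathbf{F}$ has at most $k$ blocks of size $n\times n$, so it is built in $O(kn^2)$ time, or $O(k^2n^2)$ for all of $S$. The reachability matrices $\mathbf{R}_{u_c}[u_c']$ are products of at most $k$ consecutive transition matrices followed by binarization. With prefix products along $s$, each costs at most $O(k n^3)$, and only $O(k^2)$ pairs (vertex, indirect predecessor) are needed, so the total is $O(k^3 n^3)$. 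The regularization \autoref{eq:updated_feature} is applied at most $k$ times per vertex. Each application consists of Hadamard products, a diagonal scaling and one $n\times n$ by $n\times n$ product, so it costs $O(n^3)$ and contributes $O(k^2n^3)$ overall.

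\textbf{Propagation.} The loop in \Cref{alg:sequence_counting} runs $k-1$ times. In each iteration, $\mathbf{A}_{u_c'u_c}\mathbf{P}$ multiplies an $n\times n$ matrix by an $n\times kn$ matrix, costing $O(kn^3)$. \textsc{WeightExtraction} does diagonal extraction, block removal, binarization and row scaling, all linear in the size of $\mathbf{P}$, i.e.\ $O(kn^2)$. \textsc{WeightAggregation} touches at most $k$ blocks, each with $O(n^2)$ Hadamard, row-scaling, binarization or append work, so it costs $O(kn^2)$; the set operations on $\mathcal{U}_P,\mathcal{U}_F$ cost $O(k)$. The per-iteration cost is therefore $O(kn^3)$, and the loop costs $O(k^2n^3)$. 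The final \textsc{Sum} costs $O(n)$, and the isolated-children weight from \autoref{eq:tree_counting} is linear in the relevant candidate edges.

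Summing these terms gives $O(k^3n^3)$ per pair $(u,v)$, which is polynomial in $|S|$ and in the candidate space. Summing over all $v\in C(u)$ multiplies the bound by at most $|C(u)|$, which keeps it polynomial. The main obstacle is to rigorously justify the block-count invariant through all cases of \Cref{alg:weight_aggregation}, in particular that the primary-block swap in Case 4 and the conditional reassignment in the else branch never create a second block for the same predecessor. I would prove it by induction over the loop iterations, showing that $\mathcal{U}_P \subseteq S$ is maintained as a set and that each block keeps dimensions $n\times |C(u_p\mid u,v)|$.
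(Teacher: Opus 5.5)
Your proposal is correct and is essentially the paper's argument. The paper also bounds the number of propagation steps (the regularization round and the backward propagation round) by the query size, and charges each step a bounded number of cubic-time matrix operations, so the total is polynomial.

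Your block-count invariant (passing matrix of width at most $kn$, hence $O(kn^3)$ per step) makes explicit what the paper's ``constant number of matrix operations'' glosses over. The obstacle you flag in Case~4 of \Cref{alg:weight_aggregation} is not essential to polynomiality: even if blocks were duplicated, each aggregation adds at most $k$ of them, and the resulting $O(k^2)$-block bound is still polynomial.
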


\begin{proof}
In the counting process, we only conduct two rounds of matrix propagation. At each step of the propagation, we conduct a constant number of matrix operations (i.e., matrix multiplication), which can be done in cubic time of the candidate size of query vertices.
Since the number of steps in each propagation is bounded by the query graph size, the theorem is immediate.
\end{proof}

To assess the efficiency of the constrained sequence counting, we conduct a preliminary experiment on a synthetic dataset. Specifically, we use a $32$-clique query graph and a $32$-clique data graph, where vertices share the same label.
By definition, the exact count is $32!$. We evaluate this instance using two representative subgraph matching algorithms recommended in recent survey papers~\cite{2024survey, RapidExperiment}.
None of them can finish within two hours. The state-of-the-art sampling-based subgraph counting method \fastest ~\cite{Fastest} also fails to return a valid result within $100$ seconds. In contrast, our method computes the count in $7.11$ seconds with a \qerror of $1.98$, where the majority of the runtime is spent on candidate space construction and refinement, and the counting phase only spends $0.52$ seconds.

\section{Handling Type-II Non-tree Edge Constraints}
\label{sec:lca}
In this section, we introduce the \textit{lowest common ancestor} (LCA for short) to handle the Type-II non-tree edges constraints.

Recall that a Type-II non-tree edge connects two vertices with different parents in the query tree.
However, the two vertices must have an LCA vertex in the tree. Clearly, the paths from the LCA vertex to the two endpoints form a triangle together with this non-tree edge. 
A necessary condition for a valid match of the query graph is that such a triangle must also exist in the candidate space.
However, in the candidate tree-based counting method defined by \autoref{eq:tree_counting}, the weights of candidate vertices are propagated upward along the tree paths without considering such non-tree edge constraints.
To remove the false positive counting, we develop an LCA ratio computation method. Given a Type-II non-tree edge, we check if the two endpoints of its matching edge can reach the same LCA candidate. By calculating the positive ratio, we can therefore estimate the valid subgraph homomorphism count. 

To compute the LCA ratio, we first find the descendants for each candidate of an LCA vertex using a dynamic programming method. Then, based on the LCA descendants, we compute the LCA ratio for each candidate by using set intersections to count descendant pairs that form valid triangles across endpoints.

\subsection{Finding LCA Descendants}
\label{sec:lca_descendants}
For each LCA candidate, we compute its descendant candidates located at the endpoints of a non-tree edge. 
This is done by propagating reachable candidates downward along the tree path from the LCA to the endpoint. 
To speed up the computation, we adopt a level-by-level manner where the results computed in the previous level can be reused in the current level. This can be implemented as a dynamic programming on the candidate tree. 

\vspace{1mm}
\noindent \textbf{Algorithm Details}.
The computation details are summarized in \autoref{alg:LCA_Descendants}. For each Type-II non-tree edge, we form two LCA paths, each connecting the LCA to one of its endpoints.
We perform a one-pass top-down traversal of the query tree $T_Q$ in BFS order, and compute the descendants of all LCA paths simultaneously. For each LCA path $P$, we maintain a path state $Current\_descendants[P]$, which functions as a mapping from the LCA candidates to their descendants at the current query vertex. As we traverse downwards, the path states at the parent $u_p$ are propagated to the current vertex $u$. Specifically, when visiting $u$, we first identify all $v_p \in C(u_p)$ involved in any LCA path passing through its parent $u_p$ (\autoref{line:union_parents}) and cache their child candidates $C(u \mid u_p, v_p)$ (\autoref{line:cache_children}) to ensure each candidate edge is processed only once. Subsequently, for each path $P$ passing through $u$, we update the path state $Current\_descendants[P]$ by replacing the parent candidates $v_p$ with their cached child candidates $Cache[v_p]$ (\autoref{line:extend_paths}). If $u$ is the LCA vertex of a path $P$, its state is initialized as an identity mapping of $C(u)$ (\autoref{line:initial_state}). Once $u$ reaches the endpoint of a path $P$, the current state is saved (\autoref{line:save_state}) as the mapping from the LCA candidates to the endpoint descendants.

\begin{lemma}
    The complexity of Algorithm~\ref{alg:LCA_Descendants} is $O(|E_C| + |V_Q||E_Q||V_C|)$, where $V_Q$ and $E_Q$ are the query graph vertex and edge sets, and $V_C$ and $E_C$ are the vertex set and edge set in the candidate space. 
\end{lemma}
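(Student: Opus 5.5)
We charge the running time of Algorithm~\ref{alg:LCA_Descendants} to two sources and bound each separately: the work of reading candidate edges to build the per-vertex caches, and the work of rewriting path states. The first source should give the $O(|E_C|)$ term and the second the $O(|V_Q||E_Q||V_C|)$ term. Throughout we use that the query tree is traversed once in BFS order, so each query vertex $u$ is visited exactly once with a unique parent $u_p$.

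\emph{Cache construction.} When visiting $u$, the algorithm collects the set of parent candidates $v_p\in C(u_p)$ that lie on some LCA path through $u_p$ (\autoref{line:union_parents}). For each such $v_p$ it computes $Cache[v_p]=C(u\mid u_p,v_p)$ once (\autoref{line:cache_children}). Since the union is taken over all paths before caching, each candidate edge between $C(u_p)$ and $C(u)$ is scanned at most once. Summing over all tree edges $(u_p,u)$ of $T_Q$, the total cost is bounded by the number of candidate edges, giving $O(|E_C|)$. Forming the union itself costs at most $O(|V_C|)$ per path, which is absorbed into the second term.

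\emph{Path-state updates.} There is one Type-II non-tree edge per element of $E_Q$ at most, so there are at most $2|E_Q|$ LCA paths. Each path contains at most $|V_Q|$ query vertices, since it is a simple tree path. At each vertex on path $P$, the state $Current\_descendants[P]$ is a mapping from LCA candidates to descendant candidates. We implement the update at \autoref{line:extend_paths} as follows: for each LCA candidate, its descendant set is replaced by the union of cached children of the current set. We want this to cost $O(|V_C|)$ per path per vertex. The initialization at \autoref{line:initial_state} and the saving step at \autoref{line:save_state} are each $O(|C(u)|)\le O(|V_C|)$. Multiplying $O(|E_Q|)$ paths, $O(|V_Q|)$ vertices per path, and $O(|V_C|)$ per update gives $O(|V_Q||E_Q||V_C|)$. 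Adding the cache term yields the stated bound.

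\emph{Main obstacle.} The hard step is justifying the $O(|V_C|)$ cost per path-vertex update. A naive per-LCA-candidate union of descendant sets could cost $O(|C(\text{LCA})|\cdot|V_C|)$, or it could rescan candidate edges. We will try first to argue it via the representation the algorithm maintains. The state is stored as a relation keyed by current-level candidates: we map each candidate at the current level to the set of LCA candidates reaching it, or equivalently we maintain the inverse mapping. We then replace each parent candidate $v_p$ by the pointer $Cache[v_p]$, without copying the cached set. Under this representation, the update touches each current-level entry a constant number of times, so the cost is $O(|C(u_p)|+|C(u)|)=O(|V_C|)$, with all edge-dependent work already charged to the cache term. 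If this pointer-replacement argument cannot be made rigorous, the fallback is to charge union costs to candidate edges per path. That would give $O(|E_Q||E_C|)$ instead, so we aim to avoid it.
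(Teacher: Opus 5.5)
\textbf{Same decomposition as the paper, but the key step is unsupported.} Your plan matches the paper's proof. Candidate edges are read once, when each tree edge's cache is built, giving $O(|E_C|)$. The second term counts the (query vertex, LCA path) incidences, at most $|V_Q|\cdot O(|E_Q|)$, times $O(|V_C|)$ per state update. The paper bounds the paths through each vertex by $O(|E_Q|)$, while you bound the number of paths by $2|E_Q|$, each with at most $|V_Q|$ vertices; the product is the same. The problem is the step you flagged yourself: the pointer-replacement argument you give for the $O(|V_C|)$ update cost is not correct.

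\emph{Keyed by current-level candidates.} This is the representation you describe. A candidate $v\in C(u)$ with $d$ parents in $C(u_p)$ must combine $d$ parent entries. Each $v_p$ is therefore touched once per child, so an update costs the number of candidate edges between the active part of $C(u_p)$ and $C(u)$, not $O(|C(u_p)|+|C(u)|)$. The cache term does not pay for this: the cached edges are re-read by every LCA path through $(u_p,u)$. That is exactly your $O(|E_Q||E_C|)$ fallback, which is a different bound.

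\emph{Keyed by LCA candidates.} This is the representation the algorithm actually maintains (LCA candidate $\mapsto$ current descendant set). Replacing each $v_p$ by $Cache[v_p]$ costs one operation per (LCA candidate, descendant) pair, and that relation can be quadratic in $|V_C|$. Take $n$ LCA candidates all adjacent to one candidate $w$ of the next path vertex, $w$ adjacent to $z_1,\dots,z_n$ at the following vertex, and each $z_j$ adjacent to a distinct endpoint candidate $y_j$. With a fixed query and the rest of the candidate space of size $O(n)$, the claimed bound is $O(n)$. Yet the last step performs $n^2$ replacements, and an explicit saved mapping has $n^2$ pairs. Sharing pointers across LCA candidates avoids this only by keeping the state implicit, which moves the cost to whoever reads the saved mapping (the LCA-ratio step).

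So no representation-level argument of the kind you sketch gives $O(|V_C|)$ per update, and as written the whole $O(|V_Q||E_Q||V_C|)$ term of your proof is unsupported. The paper's proof does not close this either. It gets the $O(|V_C|)$ factor by noting that the candidates \emph{involved} in a path state number $O(|V_C|)$. In other words, it charges a state by its distinct candidates rather than by the size of the LCA-to-descendant mapping, and gives no further justification.

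\textbf{To match the paper.} State that accounting convention explicitly: a path state is a shared structure over at most $|V_C|$ candidates, updated with $O(1)$ work per candidate, with all edge reads charged once to the cache. Drop the claim that each current-level entry is touched a constant number of times, which is false in both representations.
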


\begin{proof}
    The process of finding LCA descendants traverses the candidate tree in a top-down manner. By employing the cache mechanism, each candidate edge is scanned only once, resulting in a complexity of $O(|E_C|)$. During the traversal, the algorithm propagates path states at each $u$. The number of LCA paths passing through a single vertex $u$ is proportional to the tree depth $O(|E_Q|)$. For each LCA path, the path state maintains a mapping from LCA candidates to the candidates of the current vertex $u$, where the total number of involved candidates is $O(|V_C|)$. Therefore, the complexity of path states propagation is $(|V_Q||E_Q||V_C|)$. The overall complexity of the algorithm is $O(|E_C| + |V_Q||E_Q||V_C|)$.
\end{proof}

\vspace{1mm}
\noindent \textbf{Discussion}.
The high efficiency of Algorithm~\ref{alg:LCA_Descendants} is achieved by reusing the intermediate results as follows.  

$\bullet$ For two non-tree edges sharing the same LCA-endpoint path, the descendant results along a shared path only need to be computed once because the endpoints of a non-tree edge are decomposed.
    
$\bullet$ For two LCA-endpoint paths sharing the LCA, if one endpoint is the ancestor of the other, the descendant candidates computed for the ancestor endpoint can be reused for the other.
    
$\bullet$ In one LCA-endpoint path, multiple LCA candidates may map to the same descendant candidate at an intermediate vertex. The computation of that descendant candidate's child candidates is performed only once and shared across all such LCA candidates.

\begin{algorithm}[t]
    \caption{\textsc{Find LCA Descendants}}
    \rm\sffamily
    \label{alg:LCA_Descendants}
    \KwIn{The query tree $T_Q$, corresponding $CS$}
    \KwOut{LCA descendants for each LCA path}
    \SetKwProg{Fn}{Function}{:}{}

    \ForEach{$u$ in BFS Traversal of $T_Q$}{
        \If{$u$ is involved in any LCA path} {
            $u_p \gets$ Parent($u$)\;
            \If{$u_p$ exists} {
                
                All\_current\_descendants $\gets \{v_p \mid v_p \in Current\_descendants[P], \forall P \text{ covering }u_p\}$\;
                \label{line:union_parents}
                
                \ForEach{$v_p \in $ All\_current\_descendants}{Cache[$v_p$] $= C(u \mid u_p, v_p)$\;}
                \label{line:cache_children}
                
                \ForEach{LCA path $P$ passing through $u$}{
                    Current\_descendants[$P$] $\gets \{Cache[v_p] \mid v_p \in Current\_descendants[P]\}$\;
                    \label{line:extend_paths}
                }
                
            }
            \If{$u$ is the LCA of path $P$}{
                Current\_descendants[$P$] $\gets C(u)$\; 
                \label{line:initial_state}
                
            }
    
            \If{$u$ is the Endpoint of path $P$}{
                Save Current\_descendants[$P$]\;
                \label{line:save_state}
            }            
        }
    }
\end{algorithm}

\subsection{Computing LCA Ratio}
We view the tree paths from the LCA to the two non-tree endpoints as virtual edges. Together with the non-tree edge, these two virtual edges form a triangle. Our goal is to count, for each LCA candidate, how many combinations of endpoint candidates form valid triangles. Specifically, for a given LCA candidate, we first extract the candidate sets of the two endpoints of a Type-II non-tree edge that connected to it through the candidate tree paths. The total number of possible triangles is given by the product of the sizes of these two candidate sets, which aligns with the counting logic of \autoref{eq:tree_counting}. We then explicitly examine how many candidate edges exist between the two endpoint candidate sets. Each such candidate edge, together with the connections from its endpoints to the LCA candidate, forms a valid triangle.

A single LCA vertex may be associated with multiple non-tree edges, some of which share a common endpoint. We group such non-tree edges by their shared endpoint and treat this endpoint as an anchor. Within each group, we fix one candidate path from the LCA to the anchor, and compute the fraction of candidate combinations of the other endpoints that satisfy all triangle constraints induced by the group. The final LCA ratio for an LCA candidate is obtained by multiplying the ratios over all anchor groups.

\begin{lemma}
    Given an LCA-non-tree triangle, the LCA ratio can be computed in $O(|E_C||V_C|)$ time.
\end{lemma}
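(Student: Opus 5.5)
We bound the cost by charging all work to pairs (LCA candidate, candidate edge). Fix an LCA--non-tree triangle formed by an LCA vertex $u_a$ and a Type-II non-tree edge $(u_x,u_y)$. By the output of \autoref{alg:LCA_Descendants}, whose cost is already bounded by the earlier lemma, we have for every $v_a\in C(u_a)$ two descendant sets, $D_x(v_a)\subseteq C(u_x)$ and $D_y(v_a)\subseteq C(u_y)$. The LCA ratio for $v_a$ is the number of candidate edges $(v_x,v_y)\in C(u_x,u_y)$ with $v_x\in D_x(v_a)$ and $v_y\in D_y(v_a)$, divided by $|D_x(v_a)|\cdot|D_y(v_a)|$. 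The plan is therefore to show that, for each single $v_a$, the numerator and denominator cost $O(|E_C|)$. Summing over the at most $|V_C|$ candidates $v_a$ then gives $O(|E_C||V_C|)$.

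\textbf{Step 1 (denominator).} For fixed $v_a$, the sizes $|D_x(v_a)|$ and $|D_y(v_a)|$ are read off the saved path states. This takes $O(|V_C|)\subseteq O(|E_C|)$ time, since the candidate space is connected and so $|V_C|=O(|E_C|)$.

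\textbf{Step 2 (numerator via set intersection).} Mark the vertices of $D_y(v_a)$ in a bitmap or hash set, in $O(|V_C|)$ time. Then, for each $v_x\in D_x(v_a)$, scan its candidate neighbours $C(u_y\mid u_x,v_x)$ and count those that are marked. Each candidate edge of $C(u_x,u_y)$ is inspected at most once per $v_a$, so this step costs $O(|V_C|+|C(u_x,u_y)|)=O(|E_C|)$. Clearing the marks costs the same.

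\textbf{Step 3 (anchor groups).} When several non-tree edges share an anchor endpoint, the ratio is a product over groups. I would argue that each group's fraction is computed by the same marking-and-scanning procedure, with one intersection per non-tree edge in the group. This uses the same bound, charged to that edge's own candidate edge set.

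\textbf{Main obstacle.} The main obstacle is Step 3. The fraction ``of candidate combinations of the other endpoints that satisfy all triangle constraints'' could in principle require a joint count over several endpoints, which would be exponential in the group size. I would handle it by noting that, once the anchor candidate is fixed, the constraints from different non-tree edges in the group involve disjoint other endpoints and are independent. The joint fraction therefore factorizes into a product of per-edge fractions. Each per-edge fraction is computed by iterating over anchor candidates and intersecting their neighbour lists with the relevant descendant sets, which again charges each candidate edge $O(1)$ per LCA candidate. With this factorization the per-triangle cost remains $O(|E_C||V_C|)$, as claimed.
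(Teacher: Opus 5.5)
Your proof is correct, but it charges the work differently from the paper.

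The paper loops over (LCA candidate, anchor candidate) pairs. It asserts there are $O(|E_C|)$ such ``anchored candidate edges'' and charges each one an $O(|V_C|)$ intersection of the LCA candidate's descendant set at the other endpoint with the anchor candidate's neighbour list.

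You transpose this accounting. The outer loop runs over the at most $|V_C|$ LCA candidates. For each one you mark $D_y(v_a)$ once and scan the edges of $C(u_x,u_y)$ leaving $D_x(v_a)$, so each candidate edge is touched at most once per LCA candidate.

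The quantities computed are the same, but your charging is more robust. The paper's bound on the number of anchored pairs is immediate only when the LCA--anchor path is a single tree edge. For longer paths these pairs are reachability (``virtual'') edges, and their number can be as large as $|C(u_a)|\cdot|C(u_x)|$. You never need to count them, and you actually obtain the sharper bound $O\bigl(|C(u_a)|\,(|C(u_x)|+|C(u_y)|+|C(u_x,u_y)|)\bigr)$. In exchange, the paper's per-pair intersections directly give the fixed-anchor quantities needed for anchor groups, which you have to recover separately in Step 3.

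Two small repairs are needed:
\begin{itemize}
\item \textbf{Step 1.} The claim that the candidate space is connected is neither given nor needed. Store $|D_x(v_a)|$ and $|D_y(v_a)|$ with the saved path states. Then note that $|D_y(v_a)|$ is at most the number of candidate edges into $C(u_y)$ from the candidates of $u_y$'s tree parent, because each descendant is reached by a distinct such edge. This bounds the marking cost by $O(|E_C|)$.
\item \textbf{Step 3.} State explicitly that the product over the group's edges is formed for each fixed anchor candidate, and only then aggregated over anchor candidates. Multiplying per-edge fractions that were already averaged over anchors computes a different quantity, even though the cost bound is the same.
\end{itemize}
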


\begin{proof}
    For each candidate edge between the LCA and the anchor endpoint, we compute the intersection of the candidate neighbors of the LCA candidate and the anchor candidate.
    We conduct such intersections for $O(|E_C|)$ anchored candidate edges, each consuming $O(|V_C|)$ time. Thus, the lemma is complete.
\end{proof}

\vspace{1mm}
\noindent \textbf{Remark}.
By integrating Type-I and Type-II non-tree constraints with the existing tree-edge constraints in the candidate tree framework, our structural module bridges the gap between tree homomorphism and graph homomorphism, extending the candidate tree framework to full subgraph homomorphism counting. This extension is non-trivial, since tree homomorphism counting is polynomial-time solvable, whereas subgraph homomorphism counting itself is a \#P-hard problem. Compared with \fastest, which relies on sampling over simple tree homomorphism counts, our approach moves the more complex structural-constraint computation into the counting stage and leaves only the injectivity constraints to the sampling module. For queries that do not involve injectivity violations, the sampling stage can even be skipped entirely.

\section{Local Sampling}
\label{sec:iso}
While the non-tree edge constraints can be processed by the matrix computation-based method, the isomorphism constraint remains unsolved. This is because the candidate tree-based counting only propagates and aggregates vertex weights, rather than the specific matching states. In this section, we introduce a local sampling approach to estimate the proportion of isomorphism count within the homomorphism count. Unlike previous sampling-based methods, which usually conduct sampling over the entire candidate space, we conduct sampling in a local candidate space, where isomorphism constraints arise. In particular, isomorphism violations can only occur among candidates of the same-label query vertices; even within this set, clique-structured matches cannot violate injectivity, since no self-loops exist.

Query vertices with the same label may appear at different positions in the query graph and are not necessarily adjacent. Extracting them in isolation would break the structural connectivity constraints among them. To preserve these inherent structural constraints, we directly identify a connected subgraph within the original query graph that encompasses all these same-label vertices. This connected subgraph naturally consists of both tree edges and non-tree edges from the query graph, seamlessly maintaining the exact topological relationships. As a result, the target same-label vertices form a locally connected structure where each vertex is properly constrained by its neighbors. We then perform sampling exclusively within the candidate space induced by this locally connected subgraph to estimate the isomorphism ratio.

The sampling procedure proceeds by following the traversal order of the connected subgraph. When sampling a query vertex, its feasible sampling region is bounded by its already visited neighbors, meaning that earlier sampled vertices directly constrain the subsequently sampled ones. To optimize this process, we apply a greedy selection rule: among its already visited neighbors, we select the one with the strongest constraint as its tree parent to induce the feasible sampling region, from which a candidate vertex is then uniformly sampled, with the inverse sampling probability accumulated throughout the sampling process. The pre-computed homomorphism weights $W(u, v)$ from the prior counting phase, used in a similar way to \cite{Fastest}, are utilized to accurately reflect the matching distribution and continuously monitor the statistical variance of the random walks. The isomorphism ratio is then estimated as the weighted fraction of isomorphic samples among the homomorphic ones. For queries where the empirical samples fail to reach statistical confidence, we approximate this ratio by extrapolating the joint constraint-satisfaction probabilities for homomorphic and isomorphic samples separately from their average marginal survival rates. Additionally, a weight cutoff strategy is applied to truncate overly large sample weights, mitigating extreme variance caused by overly dense homomorphic regions.

\section{Experiment}
\subsection{Experiment Setup}
\noindent \textbf{Datasets}.
We conduct experiments on $10$ real-world datasets.
The first $8$ datasets have publicly available query sets and are widely used in previous works like~\cite{flowsc, Fastest, NeurSC}, 
while the remaining $2$ are large-scale graphs where we generate queries via random walks on data graphs following~\cite{flowsc, RapidExperiment}. To obtain the exact subgraph counts for each query, we use the recommended method in~\cite{RapidExperiment}. Given the extreme difficulty of subgraph counting, we set a timeout of $30$ minutes~\cite{NeurSC, flowsc} for computing the ground-truth counts on the first $8$ datasets, and a timeout of $2$ hours~\cite{flowsc} for the two large-scale graphs. Our experiments are evaluated on queries with obtainable ground-truth counts, which are further grouped into subsets by size for evaluation.
Dataset statistics are reported in \autoref{tab:dataset}.

\begin{table}[t]
    \caption{Statistics of Datasets. } 
    \label{tab:dataset}
    \centering
    \resizebox{0.9\linewidth}{!}{
        \begin{tabular}{l r r r r r}
        \toprule
        Dataset & $|V_G|$ & $|E_G|$ & $|\Sigma|$ & $d$ & $|V_Q|$ \\ \midrule
        \Yeast & 3,112 & 12,519 & 71 & 9.0 & 4 to 32 \\
        \Hprd & 9,460 & 34,998 & 307 & 7.4 & 4 to 32 \\
        \Human & 4,674 & 86,282 & 44 & 36.9 & 4 to 20 \\
        \Wordnet & 76,853 & 120,399 & 5 & 3.1 & 4 to 20 \\
        \Dblp & 317,080 & 1,049,866 & 15 & 6.6 & 4 to 16 \\
        \Youtube & 1,134,890 & 2,987,624  & 25  & 5.3 & 4 to 32\\
        \EU & 862,664 & 16,138,468 & 40 & 37.4 & 4 to 8 \\ 
        \Patents & 3,774,768 & 16,518,947 & 20 & 8.8 & 4 to 32 \\ 
        \Twitter & 41,652,230 & 1,202,513,344 & 100 & 57.7 & 4 \\
        \Friendster & 65,608,366 & 1,806,067,135 & 100 & 55.1 & 4 to 8 \\
        \bottomrule
        \end{tabular}
    }
\end{table}

\vspace{1mm}
\noindent \textbf{Compared Algorithms}.
We compare \asc with the following approximate algorithms: (1) \flowsc\cite{flowsc}, the state-of-the-art learning-based method; (2) \fastest\cite{Fastest}, the state-of-the-art sampling-based method; (3) \learnsc\cite{Learnsc}; (4) \neursc\cite{NeurSC}; and (5) \lss\cite{LSS}. 
We also evaluate subgraph homomorphism counting against three baselines: \fastest~\cite{Fastest}, \alley~\cite{Alley}, and its index-optimized variant \alleytpi~\cite{Alley}.
The source code of \flowsc, \fastest, \neursc, \alley, \alleytpi, and \lss are publicly available, and \learnsc is obtained from the authors. All parameter settings follow their default values. 

\vspace{1mm}
\noindent \textbf{Query Setting}.
Following the common setting~\cite{RapidExperiment, flowsc, NeurSC}, we load a data graph and a query graph into memory each time, and report the elapsed time per data-query pair and the \qerror$ = \max(\frac{\max(1, c)}{\max(1, \hat{c})}, \frac{\max(1, \hat{c})}{\max(1, c)})$, where $\hat{c}$ and $c$ are the estimated and ground truth counts, respectively. Experiments are conducted on a Ubuntu $22.04.1$ system, equipped with an Intel Xeon Silver $4314$ CPU @ $2.40$GHz with $64$ cores.

\subsection{Performance Evaluation}
\begin{figure*}[t]
    \centering
    \includegraphics[width=\textwidth]{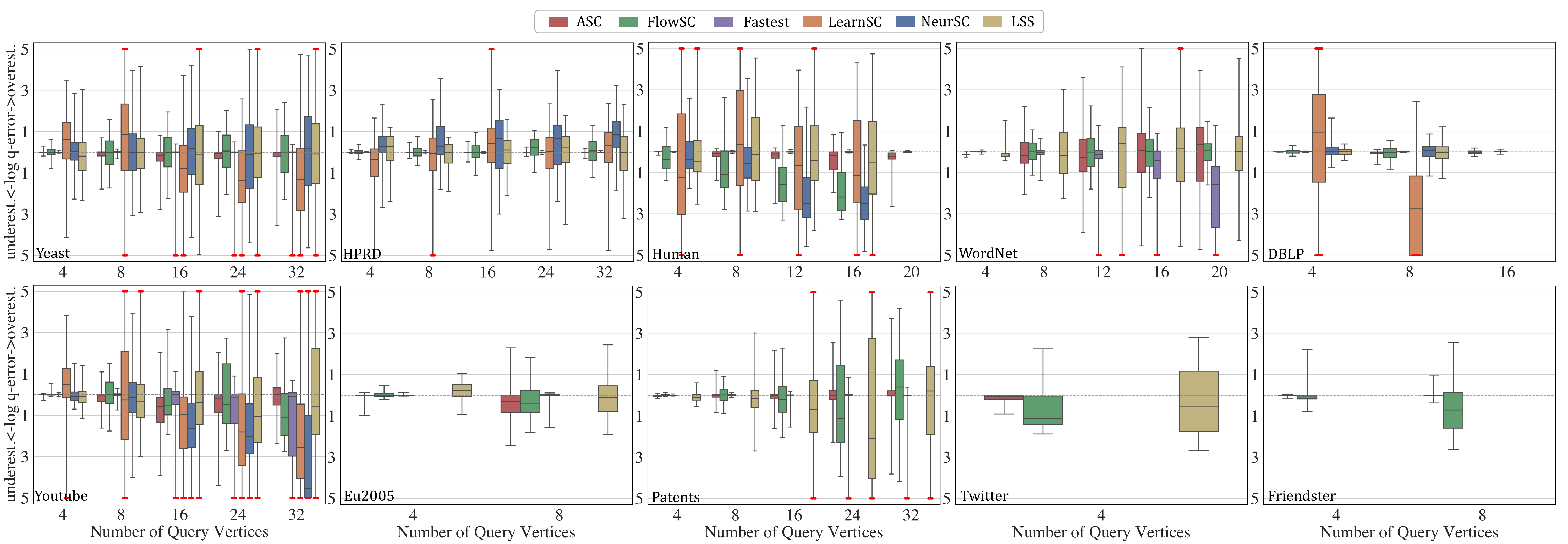}
    \caption{Evaluating accuracy with the x-axis representing the number of query graph vertices.} 
    \label{fig:accuracy}
\end{figure*}

\noindent \textbf{Accuracy Comparison}.
\autoref{fig:accuracy} reports the accuracy performance over all datasets. The boxplot presents the distribution of log \qerror of a query graph set. The upper bound (Resp. lower bound) of a box is $75\%$ (Resp. $25\%$) quantile of the \qerror, the whiskers contain all \qerror{}s from minimum to maximum, and the line in the box is the median. Values exceeding $10^5$ are clipped, with bold red caps indicating truncated whiskers.

Overall, \asc demonstrates consistently higher and more stable accuracy. 
On the small dataset \Yeast, \asc achieves stable performance across query sets of different sizes, attaining the lowest average \qerror among all methods and yielding a $3.24\times$ improvement over the state-of-the-art method \flowsc.
Although the sampling-based method \fastest performs nearly perfectly on small queries in \Yeast, it begins to suffer from sampling failures when the query size reaches $16$, resulting in a sharp degradation in accuracy. On the other two small datasets, \Hprd and \Human, \asc outperforms all learning-based methods, but is slightly worse than \fastest, as \fastest is particularly effective when the sampling space is small. On large datasets such as \Youtube and \Patents, \fastest performs well only for small query sizes. In contrast, \asc retains high accuracy on larger datasets and for larger queries. For example, on \Youtube{}\_32, \asc achieves the best performance with a low \qerror, while \flowsc degrades to a substantially higher error and \fastest fails entirely, yielding a \qerror about $7$ orders of magnitude larger than that of \asc.

In terms of learning-based methods, \neursc and \learnsc are difficult to train on large datasets.
Although \lss can handle \Twitter, it is significantly outperformed by \asc.
\flowsc can process all datasets. While it achieves higher accuracy than other learning-based methods, but returns lower accuracy than \asc on most datasets, except for \Wordnet. The exception is largely due to the low variance of the true counts in \Wordnet, where many queries share identical counts. Such distributions favor regression-based learning methods because of their tendency to regress toward the mean~\cite{flowsc}. Consequently, while \learnsc and \neursc do not benefit from this property due to their difficulty in training, \flowsc and \lss perform particularly well on \Wordnet. 

In addition, these learning-based methods rely heavily on supervision, where the performance is greatly influenced by the number of training samples.
As a result, even \flowsc, which performs best among learning-based methods, obtains poor accuracy on the \Human dataset. Moreover, on \Human{}\_20 and \Dblp{}\_16, the scarcity of effective training samples prevents the models from learning. In contrast, the performance of \asc and \fastest is not constrained by the number of training samples, and both methods perform well even on datasets with limited samples.
Moreover, only \asc and \flowsc can scale to the billion-scale dataset, where \asc still achieves higher accuracy than \flowsc.

\vspace{1mm}
\noindent \textbf{Counting Range Analysis}.
We conduct an experiment on a selected dataset to evaluate how the ground truth count of query graphs influences the distribution of \qerror. As shown in \autoref{fig:counting_range}, our algorithm performs consistently across different count ranges. In the range with larger counts, the accuracy becomes slightly worse, mainly due to the scale effect: queries with larger candidate spaces naturally accumulate more approximation error. The learning-based methods \flowsc, \learnsc, and \neursc tend to overestimate small-count queries, and underestimate large-count queries, which is consistent with the regression-to-the-mean effect commonly observed in regression-based approaches~\cite{flowsc}. The active learning strategy of \lss alleviates this issue. The accuracy of \fastest does not show a clear pattern across different counting ranges. Its performance depends on sampling difficulty, and sampling failures are observed in every interval from $10^2$ to $10^{10}$.

\begin{figure*}[t]
    \centering
    \begin{minipage}{0.25\linewidth}
        \centering
        \includegraphics[width=\linewidth]{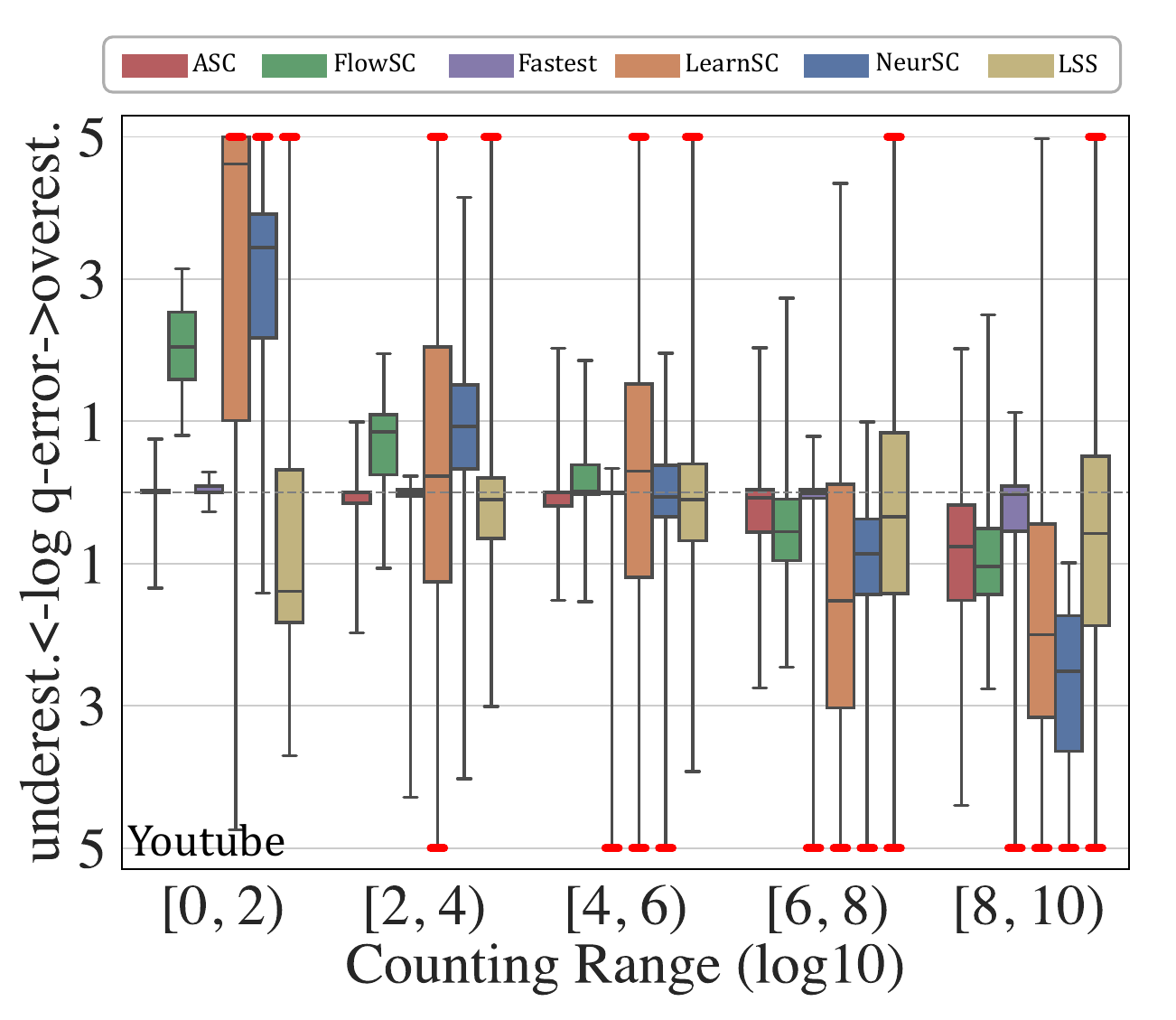}
        \vspace{-7mm}
        \caption{Effect of counting range.}
        \label{fig:counting_range}
    \end{minipage}
    \hspace{-3mm}
    \begin{minipage}{0.36\linewidth}
        \centering
        \vspace{1mm}
        \includegraphics[width=\linewidth]{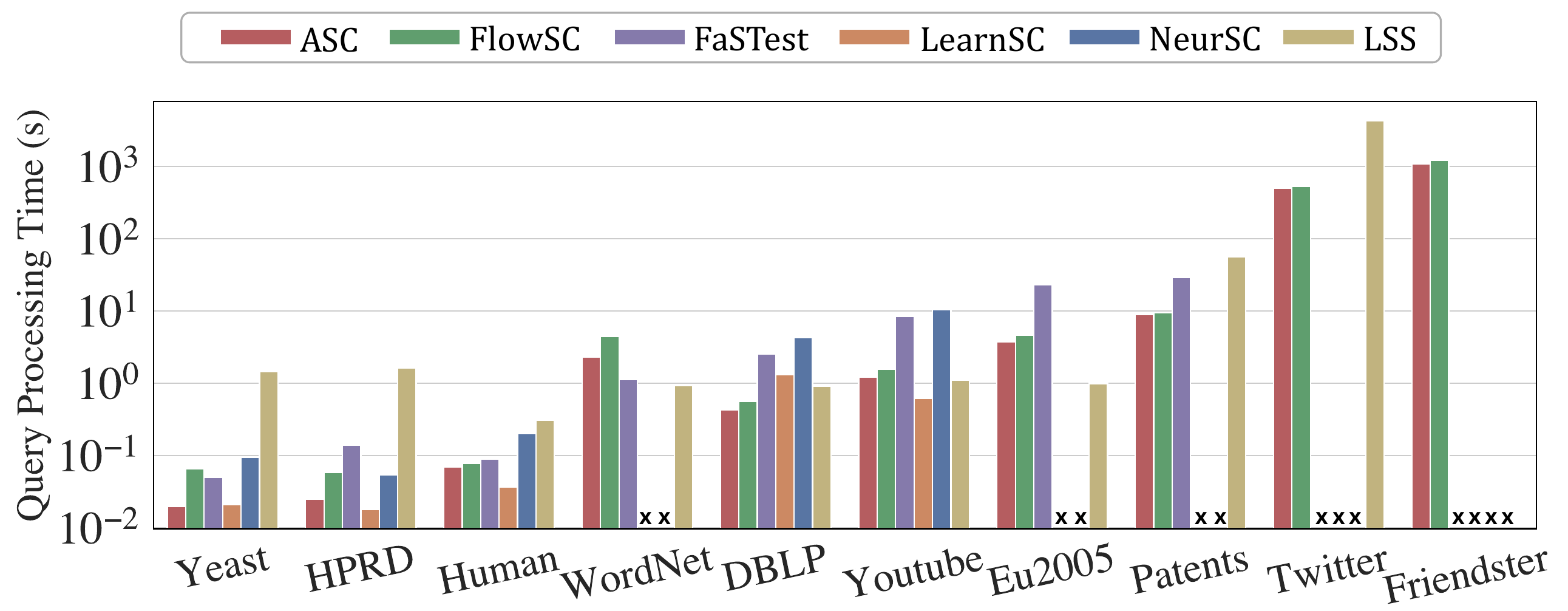} 
        \vspace{2mm}
        \caption{Efficiency Performance.}
        \label{fig:query_time}
    \end{minipage}
    \hspace{-6mm}
    \begin{minipage}{0.42\linewidth}
        \centering
        \captionof{table}{Homomorphism evaluation.\label{tab:homo_acc}}
        \resizebox{0.9\linewidth}{!}{
            \begin{tabular}{l|r r | r r | r r | r r}
            \toprule
            \multirow{2}{*}{\diagbox{Data.}{Alg.}} & \multicolumn{2}{c|}{\asc} & \multicolumn{2}{c|}{\fastest} & \multicolumn{2}{c|}{\alley} & \multicolumn{2}{c}{\alleytpi} \\
            \cmidrule{2-9}
            & q-error & Time & q-error & Time & q-error & Time & q-error & Time \\ \midrule
            \Yeast & 1.05 & 0.02 & 1.02 & 0.04 & 1$\times10^4$ & 0.17 & 1$\times10^3$ & 0.17 \\
            \Hprd & 1.01 & 0.03 & 1.00 & 0.11 & 22.45 & 0.01 & 13.15 & 0.01 \\
            \Human & 1.18 & 0.08 & 1.01 & 0.09 & 7$\times 10^8$ &6$\times10^4$& 7$\times10^8$ & 6$\times10^4$ \\
            \Wordnet & 1.00 & 2.23 & 1.08 & 0.25 & 11.85 & 7$\times10^3$ & 5.90 & 4$\times10^4$ \\
            \Dblp & 1.09 & 0.46 & 1.01 &1.64 & - & - & - & - \\
            \Youtube & 2.74 & 1.32 & 1.32 & 6.10 & - & - & - & -\\
            \EU & 6.48 & 3.92 & 1.14 & 16.69 & - & - & - & -\\
            \Patents & 1.22 & 8.83 & 1.05 & 38.52 & - & - & - & - \\
            \Twitter & 1.04 & 505.05 &- &- & - & -& -& - \\
            \Friendster & 1.15 & 1$\times10^3$ & -&- & -&- & -&- \\ \bottomrule
            \end{tabular} 
        }
    \end{minipage}

\end{figure*}

\vspace{1mm}
\noindent \textbf{Query Processing Time}.
\autoref{fig:query_time} reports the average processing time. 
A cross indicates that a method encounters training failure or out-of-memory. In general, \asc is highly competitive on both small and large datasets. On small datasets such as \Yeast, \Hprd, and \Human, \asc is faster than \fastest and achieves runtime comparable to the fastest learning-based method, \flowsc. 
\asc first incorporates structural constraints into a polynomial-time counting stage, which is efficient in practice. Its sampling stage is then performed only on local subspaces where injectivity violations may occur. If a query does not induce any injectivity violations, the sampling stage can be skipped entirely. \fastest performs sampling based on tree homomorphisms, its sampling is conducted over the full candidate space, which is often much larger than the solution space, making isomorphic samples difficult to obtain. For example, in our preliminary experiment of counting a same-label $32$-clique in a same-label 32-clique data graph, \fastest did not return a result within $100$ seconds, while \asc skipped sampling entirely because this query does not produce injectivity violations.
On the \Wordnet dataset, \neursc and \learnsc cannot be applied due to excessive training time. Both \asc and \flowsc are slower than \fastest because their filtering technique \bp is less efficient on graphs with highly concentrated label distributions. On larger datasets such as \Youtube, \EU, and \Patents, \asc and \flowsc exhibit similar runtime and show a clear advantage over \fastest. As the graph size increases, the full candidate space becomes much larger, causing \fastest to spend more time obtaining isomorphic samples. 
Due to their complex network architectures, \neursc and \learnsc cannot scale to large datasets. Although \lss uses a lightweight model and is therefore more scalable, it still cannot process the largest dataset, \Friendster. Among learning-based approaches, \flowsc offers the best efficiency and scalability, mainly due to its highly efficient one-pass bottom-up flow-learning model. Overall, \asc achieves efficiency comparable to \flowsc on large and billion-scale datasets such as \Twitter and \Friendster.

\subsection{Testing on Homomorphism}
We conduct experiments for subgraph homomorphism counting to evaluate the effectiveness of the main counting module, i.e., matrix computation-based counting (\autoref{sec:Module1}) together with the LCA ratio (\autoref{sec:lca}). Since computing the exact subgraph homomorphism counts is also computationally expensive, we select those queries with existing subgraph isomorphism ground truth that do not induce isomorphism constraints. Specifically, a query graph will not violate isomorphism constraints if its vertex labels are unique, or if vertices with the same label form a clique. In such cases, the subgraph isomorphism count coincides with the subgraph homomorphism count. We select all queries satisfying these conditions from each dataset for homomorphism testing.
We compare \asc with \fastest and the homomorphism counting method \alley and its variant \alleytpi. 

As shown in \Cref{tab:homo_acc}, \asc provides a better accuracy-efficiency trade-off, it achieves accuracy comparable to \fastest with substantially faster runtime, and clearly outperforms \alley and \alleytpi in both accuracy and efficiency. \alley and \alleytpi frequently suffer from sampling failures, which leads to unsatisfactory performance even on small datasets. In addition, \alleytpi incurs extra overhead for index construction, making it unsuitable for random data-query pairs. We omit their results on larger datasets. \fastest achieves high accuracy but at the cost of substantial runtime. In contrast, \asc attains accuracy comparable to \fastest on most datasets while requiring significantly less time. Although \asc is less accurate than \fastest on \Youtube and \EU, where the candidate space is large, its accuracy is acceptable given the substantially lower runtime. Although \Wordnet also has a larger candidate space, it contains only four $4-$size queries without isomorphism constraints, and \asc achieves near-zero error on these queries.

\subsection{Evaluation of Individual Techniques}

\noindent \textbf{Error Characterization of Type-I Module.}
We conduct error characterization experiments on small datasets \Yeast and \Human to analyze the factors that affect algebraic counting. 
However, error characterization requires exact counting of graph homomorphisms, which is extremely time-consuming even on small datasets. To tackle the computational difficulty, we reduce the exact counting to the child neighborhood of a single candidate $v$, as defined in \autoref{eq:counting_unit}, and compare $W(u, v)$ with the approximate count produced by the algebraic method.
To better investigate the underlying characteristics of our method, we consider $4$ scale-insensitive high-order features of the constrained sequences: edge density (the number of internal edges/sequence length), indirect ratio (the number of indirect edges/internal edges), conflict intensity (the number of nested or intersecting cycles/indirect edge), and norm conflict (the number of nested or intersecting cycles/sequence length). The correlations are quantified using Spearman's Rho as shown in \autoref{fig:EC_T1}. As observed, the norm conflict and conflict intensity are the most significant factors of \qerror on \Yeast and \Human, respectively. Since the two factors are highly correlated, the results suggest that the conflict pairs (i.e., nested or intersecting indirect edges within the sequence) are the main error source in algebraic counting.

We also observe that the strongest correlation with \qerror reaches around $0.5$ on \Yeast and $0.8$ on \Human. This is because our analysis focuses on the structural features of constrained sequences in the query graph, while the corresponding candidate space also substantially affects accuracy.  
Interestingly, it is observed that, for a sequence $s$ under $u$, even if we cannot guarantee the exact solution due to nested or indirect edges in $s$, the weights $W(u, v)$ for some candidates $v$ may still achieve perfect accuracy with \qerror $=1$.
This is attributed to the effort that we introduce a simple penalty based on precomputed reachable weight ratios along the reverse sequence, which can suppress the accumulation of invalid weights during propagation. 
Although this penalty cannot guarantee counting that exactly conforms to the matching states in the candidate space, it performs very well in practice, which is frequently observed in our experiments.

\begin{figure}[t]
    \centering
    \includegraphics[width=\linewidth]{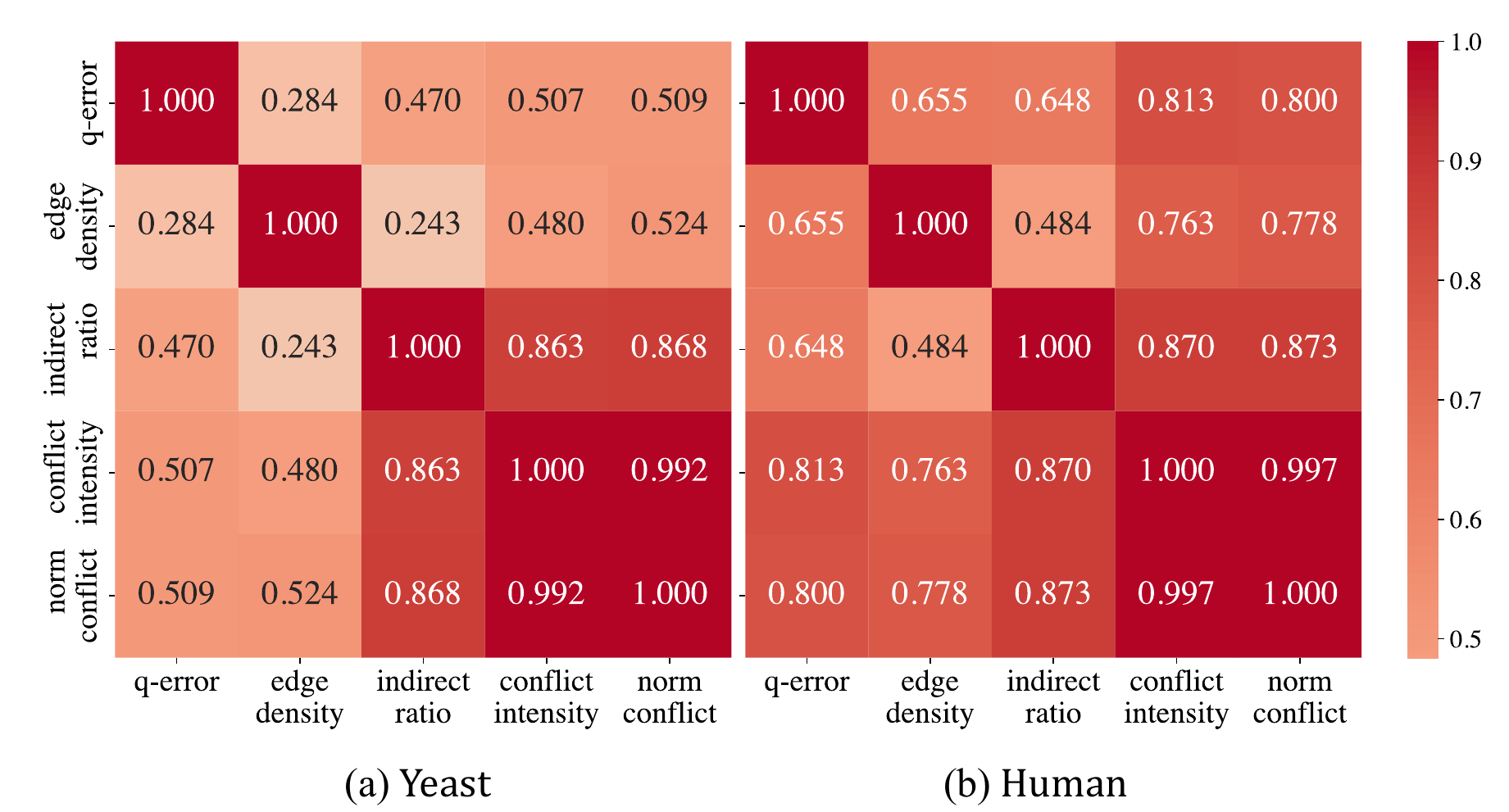}
    \caption{Correlation heatmap.}
    \label{fig:EC_T1}
\end{figure}

\vspace{1mm}
\noindent \textbf{Impact of Sequence Order.}
From the perspective of the FAQ framework~\cite{faq}, our constrained sequence defines a variable elimination order for the underlying subquery. Consequently, the resulting matrix chain multiplication can be viewed as a specialized execution plan for the associated sum-product query.
We evaluate $6$ constrained sequence ordering strategies, namely (1) Random ordering; (2) Min-Candidates, prioritizing query vertices with the smallest candidate set; (3) Max-Degree, prioritizing vertices with the highest degree; (4) Max-Candidates, prioritizing vertices with the largest candidate set; (5) Min-Degree, prioritizing vertices with the lowest degree; and (6) Heuristic, a greedy strategy that minimizes nested or intersecting indirect edges within the sequence. The average \qerror of each $W(u, v)$ over queries from \Yeast and \Human for each strategy is presented in \autoref{tab:sequence_order}. The results show that Heuristic performs the best, which aligns with our previous analysis, i.e., fewer conflicting indirect edges bring less accumulated error. 
In addition, we observe that different ordering strategies consume comparable query time as expected.
This is because the number of vertices and edges in the sequence remains unchanged across different orders.

\vspace{1mm}
\noindent \textbf{Analysis on Type-II module.}
For the Type-II module, when computing the LCA Ratio, we only account for the number of valid combinations in the LCA-endpoints triangle. For the two virtual edges between the LCA and the endpoints, we consider only reachability and ignore the actual number of paths in the candidate space for efficiency. In other words, regardless of how many tree paths connect an endpoint candidate to an LCA candidate, we treat them as a single path. 
Due to the infeasibility of exact algorithms, we evaluate the accuracy of the LCA Ratio for each LCA candidate. In particular, for each LCA candidate, we compute the \textit{variance} in the number of paths from endpoint candidates for each non-tree edge to that LCA candidate, and analyze its correlation with estimation accuracy. 
The experiment results report that Spearman's rho reaches $0.65$ and $0.42$ on \Yeast and \Human, respectively, both showing significant p-values, which confirms our hypothesis.

\vspace{1mm}
\noindent \textbf{Ablative Analysis.}
We conduct ablation studies on $3$ modules for handling Type-I non-tree edge, Type-II non-tree edge, and Iso constraints. 
All variants follow the framework of candidate tree counting. (-) Type-I denotes the model without Type-I non-tree constraint computation, while (-) Type-II and (-) Iso have similar meanings.
Iso-only removes both Type-I and Type-II non-tree constraint computations. 
We conduct experiments on $4$ representative datasets, namely \Yeast, \Human, \Wordnet, and \EU, with different characteristics on average degree and label distributions.

\textit{Accuracy}.
It is reported on \autoref{tab:ablation} that \asc achieves the best accuracy performance with $97.69$ q-error, 
whereas others perform much worse. For example, Iso-only generates an extremely high q-error of $2.1*10^{16}$. 
Surprisingly, the model without the Iso constraint module (i.e., (-) Iso) performs fairly well and even outperforms \fastest.
This implies that our Type-I and Type-II modules can effectively capture structural constraints of the graph. In contrast, without Type-I and Type-II modules, the tree-homomorphism-based Iso module alone cannot achieve satisfactory accuracy.

\begin{table}[t]
    \centering
        \caption{The impact of different orders.}
    \resizebox{\linewidth}{!}{
        \begin{tabular}{l | r r r r r r }
        \toprule
        Order & Random & Min-Candidates & Max-Degree & Max-Candidates & Min-Degree & Heuristic \\ \midrule
        \Yeast & 1.23 & 1.24 & 1.44 & 1.47 & 1.27 & \textbf{1.21} \\ 
        \Human & 2.23 & 4.64 & 9.28 & 9.61 & 2.88 & \textbf{2.13} \\
         \bottomrule
        \end{tabular}    
    }
    \label{tab:sequence_order}
\end{table}

\begin{table}[t]
    \centering
        \caption{Effect of different modules.}
    \resizebox{\linewidth}{!}{
        \begin{tabular}{l | r r r r r r}
        \toprule
        Module & (-) Type-I & (-) Type-II & (-) Iso & Iso-only & \fastest & \asc  \\ \midrule
        \qerror & $6.8*10^9$ & $7.3*10^{15}$ & $1.2*10^6$ & $2.1*10^{16}$ & $7.5*10^6$ & $97.69$ \\ 
        Time (s) & 1.36 & 1.11 & 1.30 & 1.09 & 3.37 & 1.37 \\
         \bottomrule
        \end{tabular}    
    }
    \label{tab:ablation}
\end{table}

\textit{Query Time}. As reported in \autoref{tab:ablation}, all variants consume almost the same amount of query time. This indicates that removing a specific module does not bring a clear efficiency improvement. The minor time differences are mainly attributed to the distribution of constraints in the query graph.

\vspace{1mm}
\noindent \textbf{Constraint Analysis.}
\begin{figure*}[t]
    \centering
    \begin{minipage}{0.75\linewidth}
        \centering
        \includegraphics[width=\linewidth]{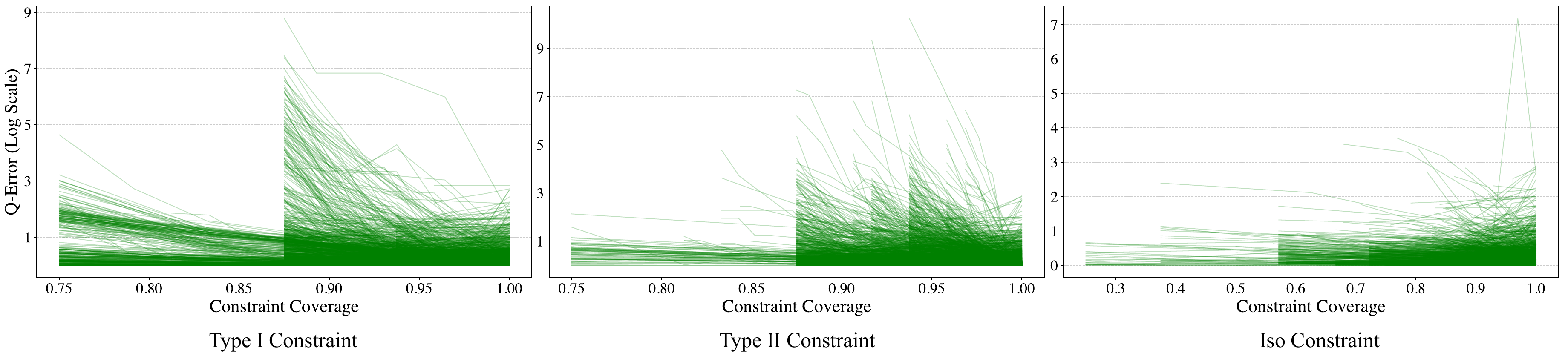} 
        \caption{Accuracy against constraint coverage.}
        \label{fig:parameter}
    \end{minipage}
    \hspace{-4mm}
    \begin{minipage}{0.25\linewidth}
    \vspace{-14mm}
        \centering
        \captionof{table}{Time growth.\label{tab:time_growth}}
        \vspace{-2mm}
        \resizebox{0.9\linewidth}{!}{
            \begin{tabular}{l|r r}
            \toprule
            Type & Cov. = 0.0 & Cov. = 1.0 \\ \midrule
            Type-I & 0.82 & 0.83 \\
            Type-II & 0.83 & 0.83 \\
            Iso & 0.81 & 0.83 \\ \bottomrule
            \end{tabular} 
        }
    \end{minipage}

\end{figure*}
To study how the constraint intensity of each module affects the counting performance, we vary the constraint intensity within the candidate tree counting framework. Specifically, for each module, we control the proportion of its constraints involved in the counting across $5$ incremental levels: $0\%, 25\%, 50\%, 75\%$, and $100\%$. For Type-I non-tree constraints, which arise in the non-tree edges of each constrained sequence in the candidate tree, an $x\%$ coverage rate signifies that only $x\%$ of these non-tree edges are considered during computation, while the left $2$ modules are kept intact. Type-II constraints involve non-tree edges with different parent vertices, and we group these edges by the LCA vertex of their endpoints. Similarly, we restrict the number of non-tree edges used in LCA Ratio computation within each LCA group by utilizing only $x\%$ of the edges, and the remaining $2$ modules are kept intact. For iso constraints, which target groups of same-label vertices that may violate injectivity, we randomly select $x\%$ of the vertices in the group for constraint computation.

We conduct experiments on \EU and \Human, which have high average degrees, and \Yeast, which has a moderate average degree. 
Although we preset $5$ coverage ratio levels, the actual constraint coverage often deviates significantly. Within the candidate tree counting framework, not every parent vertex in the query tree poses Type-I non-tree constraints in its child vertices. For the parent vertices without Type-I constraints, the weight updates degrade to \autoref{eq:tree_counting}, resulting in exact computation. If only one parent vertex involves Type-I non-tree constraints while others remain exact, the overall accuracy impact remains limited, even if that specific parent ignores its constraints entirely. Consequently, evaluating performance based on preset coverage yields erratic, irregular fluctuations. To ensure a rigorous analysis, we utilize actual constraint coverage instead. Each vertex in the query tree is assigned a local constraint coverage, representing the actual coverage ratio among its child vertices. For vertices that have no children or lack non-tree edges between their children, the local coverage is set to $1.0$. For all other vertices, we randomly select non-tree edges based on the preset coverage and calculate the resulting actual coverage. The global actual coverage for each query is then defined as the average of the local coverage across all its vertices. Similarly, we calculate the actual Type-II constraint coverage for each vertex. For Iso constraints, vertices are grouped by their labels. Groups that do not violate injectivity are assigned a local coverage of $1.0$. For the remaining groups, the local coverage is derived from the vertices retention rate within the connected subgraph used for sampling. The overall actual coverage is then determined by averaging the coverage across all groups.

\autoref{fig:parameter} illustrates the accuracy performance of all queries across the three datasets relative to the varying coverage of the three constraint types. The actual constraint coverage is determined by both the query graph structure and the preset ratios. Overall, for all three constraint types, the accuracy for most queries improves as the coverage increases. However, a small number of queries do not achieve optimal accuracy at $100\%$ coverage. For the Type-I constraint, the computation is not perfectly exact. For constrained sequences containing nested cycles, we apply penalties to potential invalid weights based on pre-calculated reachable weight ratios. Depending on the actual matching states within the Cartesian space, the penalty can be either excessive or insufficient. In such cases, removing certain non-tree edges may inadvertently mitigate these over- or under-penalizations, thereby leading to improved accuracy. Type-II constraint computation is performed at the LCA-group level. Within a single LCA group, the LCA ratio only considers whether a candidate of a non-tree edge endpoint is connected to the LCA candidate, while actually multiple valid tree paths may exist between them. Since our estimation does not capture the variation in the number of such paths, removing certain non-tree edges can occasionally compensate for part of the estimation error. In the Iso constraint sampling process, certain regions within the query subgraphs and their corresponding candidate space can be particularly difficult to sample. For instance, the local area of the query subgraph may exhibit high symmetry, resulting in a high density of homomorphisms but sparse isomorphisms within the candidate space. By limiting the constraint coverage, this symmetry may be broken, which can lead to a slight improvement in accuracy.

In summary, while limiting constraint coverage may occasionally offset inherent model errors in some queries, a $100\%$ coverage rate remains the optimal choice for most queries. Furthermore, as shown in Table~\ref{tab:time_growth}, increasing the coverage from $0\%$ to $100\%$ results in negligible changes to the average query processing time.

\subsection{Discussion}
\asc achieves high accuracy in homomorphism counting as shown in \Cref{tab:homo_acc}, benefiting from its precise handling of structural constraints, especially the non-tree ones. While it also performs well for isomorphism counting, the iso ratio module remains the primary error source, as estimating the isomorphism ratio via sampling is inherently less accurate than the algebraic counting. We see two directions to improve it. First, the current sampling could be replaced by more robust alternatives, since sampling still faces inherent limitations: global sampling (e.g., in \fastest and \alley) often becomes inaccurate or fails in vast search spaces, and although our local sampling alleviates this by skipping query regions that cannot violate injectivity and sampling only over the conflict-prone regions, it does not fundamentally escape the accuracy limitations of sampling. Second, although the structural constraints, including non-tree edges, are already incorporated into the counting, their effect is currently confined to the weight accumulation in the counting stage. Another promising direction is thus to let such constraints act directly on the Cartesian-product space underlying the counting, pruning its enumerable states rather than only correcting the accumulated weights as in the current design.

\section{Related Work}

\noindent \textbf{Subgraph Matching}.
Subgraph matching research has been dominated by the filtering-ordering-enumeration paradigm. Recent studies have enhanced this framework through efficient filtering \cite{GraphQL, TurboIso, CECI, CFLMatch, DAF, Fastest, flowsc}, optimized ordering strategies \cite{CFLMatch, VF2, quicksi, RISubgraphMatching}, and advanced enumeration techniques \cite{RapidExperiment, 2024survey}. Existing enumeration techniques primarily fall into two categories: exploration-based and decomposition-based methods. Within exploration-based approaches, performance is often improved via search space pruning \cite{gup, DAF, bee}, intermediate result reuse \cite{suff, VEQ, kim2023fast, bsx, lu2025accelerating}, and recursion reduction \cite{circinus, bsx, sun2019efficient, calig}. Alternatively, decomposition-based methods \cite{aberger2017emptyheaded, mhedhbi12optimizing, rapidmatch, lai2017scalable, lai2016scalable, qiao2017subgraph, li2025subgraph} partition the query graph and employ join techniques, such as binary joins \cite{lai2017scalable, lai2016scalable, sun2012efficient} and worst-case optimal joins (WCOJ) \cite{WCOJoin, aberger2017emptyheaded, mhedhbi12optimizing, rapidmatch}.

\vspace{1mm}
\noindent \textbf{Subgraph Counting}.
Subgraph counting primarily employs three approximate paradigms. Summarization-based methods \cite{CSET, SumRDF, 2014combinatorial, color} decompose queries and aggregate substructure counts, yet often struggle with accuracy due to independence assumptions. Sampling-based approaches \cite{IMPR, MOTIVO, WanderJoin, Fastest} are widely adopted but face inherent sampling failures, despite recent efforts in search-space compression and strategy optimization. Finally, learning-based methods \cite{flowsc, LSS, NeurSC,alss} leverage GNNs to capture graph interactions, but their effectiveness is limited by the complex modeling between graph features and counts, high training overhead, and heavy dependence on ground-truth supervision.

\section{Conclusion}
In this paper, we study the problem of subgraph counting. We propose an algebraic counting method based on the candidate tree-based counting framework.
Using fast matrix computation operations, we can obtain accurate subgraph homomorphism counting with high efficiency. Based on this, 
we obtain the final subgraph isomorphism count via a local sampling strategy. Extensive experiments demonstrate the superior performance of our proposals.

\section*{Acknowledgements}
\noindent This work was supported by China Scholarship Council (20230844
0228), 
National Natural Science Foundation of China (62572137), Guangdong Basic and Applied Basic Research Foundation 
(2025A15\allowbreak15011716), Major Key Project of PCL (PCL2024A05 and PCL2025A16), 
Australian Research Council Centre of Excellence for Mathematical Modelling of Cellular Systems (CE230100001), Australian Research Council Discovery Project (DP260100689), 
Australian Research Council Discovery Early Career Researcher Awards (DE250100226),
and National Natural Science Foundation of China (U2241211).

\newpage
\bibliographystyle{ACM-Reference-Format}
\bibliography{bib}
\appendix

\end{document}